\newtheorem{theorem}{Theorem}[section]
\newtheorem{lemma}[theorem]{Lemma}
\newtheorem{claim}[theorem]{Claim}
\newtheorem{definition}[theorem]{Definition}
\newtheorem{remark}[theorem]{Remark}
\newtheorem{fact}[theorem]{Fact}
\newtheorem{thm}[theorem]{Theorem}
\newcommand{\by}{\times}
\newcommand{\set}[1]{\left\{ #1 \right\}}
\newcommand{\union}{\cup}
\newcommand{\intersect}{\cap}
\newcommand{\sm}{\setminus}
\renewcommand{\tilde}{\widetilde}
\renewcommand{\bar}{\overline}
\DeclareMathOperator{\poly}{poly}
\def\pr{\qopname\relax n{\mathbf{Pr}}}
\def\ex{\qopname\relax n{\mathbf{E}}}
\def\min{\qopname\relax n{min}}
\def\max{\qopname\relax n{max}}
\newcommand{\expect}[2][]{\ex_{#1} [#2]}
\newcommand{\RR}{\mathbb{R}}
\newcommand{\RRp}{\RR_+}
\newcommand{\NN}{\mathbb{N}}
\def\A{\mathcal{A}}
\def\D{\mathcal{D}}
\def\H{\mathcal{H}}
\def\I{\mathcal{I}}
\def\P{\mathcal{P}}
\def\V{\mathcal{V}}
\def\X{\mathcal{X}}
\def\Y{\mathcal{Y}}
\def\sse{\subseteq}
\newcommand{\grad}{\bigtriangledown}
\newcommand{\eat}[1]{}
\newcommand{\INPUT}{\item[\textbf{Input:}]}
\newcommand{\OUTPUT}{\item[\textbf{Output:}]}
\newcommand{\PARAMETER}{\item[\textbf{Parameter:}]}
\newcommand{\maxi}[1]{\mbox{maximize} & {#1 } & \\}
\newcommand{\st}{\mbox{subject to} }
\newcommand{\con}[1]{&#1 & \\}
\newcommand{\qcon}[2]{&#1, & \mbox{for } #2.  \\}
\newenvironment{lp}{\begin{equation}  \begin{array}{lll}}{\end{array}\end{equation}}
\newenvironment{lp*}{\begin{equation*}  \begin{array}{lll}}{\end{array}\end{equation*}}
\title{A Truthful Randomized Mechanism for Combinatorial Public Projects via Convex Optimization\footnote{Extended abstract appears in \emph{Proceedings of the 12th ACM Conference on Electronic Commerce (EC), 2011.}}}
\author{Shaddin Dughmi\thanks{Supported by NSF Grant CCF-0448664.} \\
Department of Computer Science\\
Stanford University\\
{\tt shaddin@cs.stanford.edu}}
\begin{document}

%beginproc
% \conferenceinfo{EC'11,} {June 5--9, 2011, San Jose, California, USA.} 
% \CopyrightYear{2011} 
% \crdata{978-1-4503-0261-6/11/06} 
% \clubpenalty=10000 
% \widowpenalty = 10000
%endproc 
\maketitle

\begin{abstract}
In \emph{Combinatorial Public Projects}, there is a set of projects that may be undertaken, and a set of self-interested players with a stake in the set of projects chosen. A public planner must choose a subset of these projects, subject to a resource constraint, with the goal of maximizing social welfare. Combinatorial Public Projects has emerged as one of the paradigmatic problems in \emph{Algorithmic Mechanism Design}, a field concerned with solving fundamental resource allocation problems in the presence of both selfish behavior and the computational constraint of polynomial-time.

We design a polynomial-time, truthful-in-expectation,
$(1-1/e)$-approximation mechanism for welfare maximization in a fundamental variant of combinatorial public projects.  Our results apply to combinatorial public projects when
players have valuations that are {\em matroid rank sums (MRS)}, which 
encompass most concrete examples of submodular functions studied in 
this context, including coverage functions, matroid weighted-rank
functions, and convex combinations thereof.  Our approximation factor is the best possible,  assuming $P \neq NP$.
Ours is the first mechanism that achieves a constant factor approximation for a natural NP-hard variant of combinatorial public projects.

%Todo in full version: say it is an instantiation of convex rounding?

% Our mechanism is an instantiation of  the framework of Dughmi, Roughgarden and Yan \cite{DRY11}.  instantiation of a new framework for designing
% approximation mechanisms based on randomized
% rounding algorithms.  A typical such algorithm first optimizes over
% a fractional relaxation of the original problem, and then randomly
% rounds the fractional solution to an integral one.  With rare
% exceptions, such algorithms cannot be converted into
% truthful mechanisms.  
% The high-level idea of our mechanism design framework is to optimize
% {\em directly over the (random) output of the rounding algorithm},
% rather than over the {\em input} to the rounding algorithm.
% This approach leads to truthful-in-expectation mechanisms, and these
% mechanisms can be implemented efficiently when the corresponding
% objective function is concave.  For bidders with MRS valuations, we
% give a novel randomized rounding algorithm that leads to both
% a concave objective function and a $(1-1/e)$-approximation of the
% optimal welfare.

%  Our approximation mechanism also provides an interesting separation
% between the power of maximal-in-distributional-range mechanisms and
% that of universally truthful (or deterministic) VCG-based mechanisms,
% which cannot achieve a constant-factor approximation for welfare
% maximization with MRS valuations.

\end{abstract}

\thispagestyle{empty} %full
\addtocounter{page}{-1} %full

 %proc
% \category{F.0}{Theory of Computation}{General}

% \terms{Algorithms, Economics, Theory}

% %\keywords{Algorithmic Mechanism Design, Truthfulness}
 %endproc

\newpage

\section{Introduction}
% Algorithmic Mechanism Design and CPP, mention flexible (footnote?) and shahar

The overarching goal of {\em algorithmic mechanism design} is to design
computationally efficient algorithms that solve or approximate
fundamental resource allocation problems in which the underlying data is a
priori unknown to the algorithm.  A problem that has received much attention in this context --- albeit mostly in the form of negative results --- is \emph{Combinatorial Public Projects} (CPP).  Here,
there are $m$  \emph{projects} being considered by a public planner, $n$ \emph{players}, and a bound $k\leq m$ on the number of projects that may be chosen.  Each player $i$ has a private \emph{valuation} $v_i(S)$ for each subset $S$ of the projects.  We consider the \emph{flexible} variant of CPP, where a feasible solution is a set of at most $k$ projects\footnote{This is in contrast to the \emph{exact} variant, where each feasible solution consists of \emph{exactly} $k$ projects --- a difference that is uninteresting in an approximation algorithms context, yet has major implications when incentives are in the picture. For more on the distinction between the two variants, we refer the reader to \cite{Dobzin11}.}. The goal is to choose a feasible set of projects $S$ maximizing \emph{social welfare}:  $\sum_i v_i(S)$. The valuations are initially unknown to the public planner, and must be elicited from the (self-interested) players. A ``mechanism'' for CPP extracts this information, and decides on a set of projects to undertake. The mechanisms we consider can charge the players payments in order to incentivize truthful reporting of their valuations. Moreover, we seek mechanisms that run in polynomial time. 

Since CPP is highly inapproximable for general valuations --- even by non-truthful algorithms --- it is most interesting to study CPP for restricted classes of valuations. Most notable among these are submodular valuations, as they naturally model the pervasive notion of ``diminishing marginal returns''.   In this paper, we study CPP for a fundamental and large subset of submodular valuations: \emph{Matroid Rank Sum Valuations}. This class includes most concrete examples of submodular functions studied in this context. Most notably, it includes the canonical and arguably most natural example of submodularity: coverage functions.

% Context: Welfare maximization problems: canonical: CA and CPP
Combinatorial public projects and its variants are examples  of \emph{welfare maximization problems}. There are many other examples, most notable among them are \emph{combinatorial auctions}, with their many variants (see e.g. \cite{NRTV07}). Welfare maximization problems occupy a central position in mechanism design, not only because of the fundamental nature of the utilitarian objective, but also due to the rich economic theory surrounding them. Most notably, the celebrated Vickrey-Clarke-Groves (VCG)  mechanism (see e.g. \cite{NRTV07}) is a general solution for all these problems, at least from an economic perspective. The VCG mechanism is \emph{truthful}, in that it is in a player's best interest to report his true valuations regardless of the reports of the other players. Moreover, VCG finds the welfare maximizing solution. 

Unfortunately, however, most interesting welfare maximization problems, such as combinatorial public projects, are NP-hard. Therefore, implementing VCG  efficiently --- i.e. in polynomial time --- is impossible unless $P = NP$. Moreover, as first argued in \cite{NR00}, most existing approximation algorithms --- unlike exact algorithms --- cannot be converted to truthful mechanisms by the imposition of a suitable payment scheme. This necessitates the design of carefully crafted approximation algorithms, tailored specifically for truthfulness. Understanding the power of these  truthful approximation mechanisms is the central goal of algorithmic mechanism design. This research agenda was first advocated by Nisan and Ronen \cite{NR99}. Since then, combinatorial auctions and combinatorial public projects have emerged as the paradigmatic ``challenge-problems'' of the field, with much work in recent years establishing upper and lower-bounds on truthful polynomial-time mechanisms for these problems, for example: \cite{LS05,DNS05,DS06,DNS06,DN07a,D07,DS08,PSS08,BDFKMPSSU10,BSS10,Dobzin11,DRY11}.

% Lowerbounds: In general, but focus on CPP being a a tough one... 
The ``holy grail'' of algorithmic mechanism design is to design polynomial-time truthful approximation mechanisms that match the approximation guarantee of the best (non-truthful) polynomial-time approximation algorithm. Unfortunately, several recent impossibility results have shed serious doubt on the possibility of this goal \cite{DN07a,PSS08,BDFKMPSSU10,BSS10,Dobzin11}. Combinatorial public projects, in particular, bore the brunt of the most brutal of these negative results \cite{PSS08,BSS10,Dobzin11}.  Fortunately,  all but one of these lower bounds apply exclusively to deterministic mechanisms, and none apply to randomized mechanisms for the --- arguably more natural --- flexible variant of combinatorial public projects.

% Upperbounds: deterministic fail, MIDR algorithms emerging. At first, for "simpler" problems such as MUA and problems with an FPTAS, and problems with LP relaxations satisfying a condition
As the limitations of deterministic mechanisms became apparent, a recent research direction has focused on designing randomized approximation mechanisms for the fundamental problems of algorithmic mechanism design \cite{LS05,DD09,DR10,DFK11,DRY11}. These mechanisms are instances of the only general approach\footnote{The random sampling approach used in \cite{D07}, while arguably  general, does not seem applicable beyond auction settings --- in particular, it is not applicable to combinatorial public projects.} known for designing
(randomized) truthful mechanisms:  via {\em maximal-in-distributional
  range (MIDR) algorithms}~\cite{DD09}. 
An MIDR algorithm fixes a set of distributions over feasible solutions --- the {\em distributional range} --- independently of the 
valuations reported by the self-interested participants, and outputs a
random sample from the distribution that maximizes expected (reported)
welfare.
The ``Vickrey-Clarke-Groves (VCG)'' payment scheme renders an MIDR
algorithm \emph{truthful-in-expectation} --- that is, a player unaware of the coin flips of the mechanism maximizes his expected utility by reporting truthfully.

% Recent push: DRY design a mechanism for CA with MRS valuations based on a framework, does this framework apply to other problems?
Recently Dughmi, Roughgarden and Yan \cite{DRY11} presented the most general framework to date for the design of maximal-in-distributional-range algorithms. Their approach is based on convex optimization, and generalizes the celebrated linear-programming based approach of Lavi and Swamy \cite{LS05}. Given a mathematical relaxation to a welfare maximization problem, \cite{DRY11} advocates designing randomized rounding schemes that are \emph{convex}. Given a convex rounding scheme, the problem of finding the best \emph{output} of the rounding scheme is a convex optimization problem solvable in polynomial time, and implements an MIDR allocation rule. They then show how to design a convex rounding scheme for combinatorial auctions with matroid rank sum valuations, yielding an optimal $(1-1/e)$ approximation mechanism. We elaborate on the framework of \cite{DRY11} in Section \ref{sec:convexrounding}. 

 By reducing the problem of designing a truthful mechanism to that of designing a convex rounding scheme, the approach of \cite{DRY11} yielded the first optimal truthful mechanism for a variant of combinatorial auctions with restricted valuations. It is now natural to wonder if their approach is applicable to other welfare maximization problems. In particular, can the convex rounding framework be used to obtain optimal approximation mechanisms for interesting variants of Combinatorial Public Projects? 

We answer this question in the affirmative, and elaborate on our contributions below.

\subsection{Contributions}

%Yes!! We use their framework --- in more advanced fashion --- to get a similar result for CPP. The constraints of CPP make it so that the rounding must be correlated, and hence trickier to get a rounding that is both convex and a good approximation. We detail contributions below

 We design a $(1-1/e)$-approximate convex rounding scheme for combinatorial public projects with matroid rank sum valuations. This yields a $(1-1/e)$-approximate truthful-in-expectation mechanism for CPP, running in expected polynomial-time. This is the best approximation possible for this problem, even without truthfulness, unless $P=NP$. Therefore, ours is the first truthful mechanism for an NP-hard variant of CPP that matches the approximation ratio of the best non-truthful algorithm. Our results works with ``black-box'' valuations, provided that players can answer a randomized analogue of value oracles.

To prove our results, we follow the general outline of \cite{DRY11}. However, our task is more challenging: whereas in combinatorial auctions, randomized rounding may allocate each item independently (the approach taken in \cite{DRY11}), this is not possible in CPP.  We must respect the cardinality constraint of $k$ on the set of chosen projects, and therefore our rounding scheme must by fiat be \emph{dependent}. This presents a major challenge in analyzing our rounding scheme. Whereas the expected value of a submodular function on a product distribution (i.e. independent rounding) has been studied extensively, and is closely related to the now well-understood multi-linear  (see e.g. \cite{CCPV07,V08}), analyzing the expected value of a dependent distribution --- in particular proving it to be a concave function of underlying parameters --- is a technical challenge that we overcome by combining techniques from combinatorics, convex analysis, and matroid theory.

%We design 1-1/e for CPP(MRS). Optimal assuming P\neq NP. First optimal truthful approximation for variant of CPP with restricted valuations. Our results are in a lottery-query and communication complexity model -- this is without loss for some succinctly represented classes of vals, and can be removed in case of epsilon-truthfulness.

%We follow convex rounding framework. We design a new "correlated" rounding scheme designed to respect cardinality bound of CPP. Analyzing this scheme for MRS valuations is more involved than in DRY, because of the dependence between the items.

\subsection{Additional Related Work}
Combinatorial Public Projects, in particular its \emph{exact} variant, was first introduced by Papadimitriou, Schapira and Singer \cite{PSS08}. They show that no deterministic truthful mechanism for exact CPP with submodular valuations can guarantee better than a $O(\sqrt{m})$ approximation to the optimal social welfare. The  non-strategic version of the problem, on the other hand, is equivalent to maximizing a submodular function subject to a cardinality constraint, and admits a $(1-1/e)$-approximation algorithm due to Nemhauser, Wolsey and Fisher \cite{nwf78}, and this is optimal \cite{RazS97} assuming $P \neq NP$.  

Buchfuhrer, Schapira and Singer \cite{BSS10}  explored approximation algorithms and truthful mechanisms for CPP with various classes of valuations in the submodular hierarchy. The most relevant result of \cite{BSS10} to our paper is a lower-bound of $O(\sqrt{m})$ on \emph{deterministic} truthful mechanisms for the exact variant of CPP with coverage valuations --- a class of valuations for which our \emph{randomized} mechanism for flexible CPP obtains a $(1-1/e)$ approximation.

Most recently, Dobzinski \cite{Dobzin11} showed two lower bounds for CPP in the value oracle model:  A lower bound of $O(\sqrt{m})$ on universally truthful mechanisms for flexible CPP with submodular valuations, and a lower bound of $O(\sqrt{m})$ on truthful-in-expectation mechanisms for \emph{exact} CPP with submodular valuations. We note that the latter was the first unconditional lower bound on truthful-in-expectation mechanisms.

%CPP Papers by Yaron et al.. NOTe that the computation and incentives paper has an ub and lb for coverage!! Definitely cite!!!

%all related work in DRY + DRY + Shahar's recent stuff, particularly on CPP

\section{Preliminaries}
\def\S{\mathcal{S}}

\subsection{Combinatorial Public Projects}

In \emph{Combinatorial Public Projects} there is a set
$[m]= \set{1,\ldots,m}$ of \emph{projects}, a cardinality bound $k$ such that $0 \leq k \leq m$, and a set
$[n]=\set{1,\ldots,n}$ of \emph{players}. Each player $i$ has a valuation
function $v_i:2^{[m]}\rightarrow \RRp$ that is normalized
($v_i(\emptyset) = 0$) and monotone ($v_i(A) \leq v_i(B)$ whenever
$A \sse B$). In this paper, we consider the \emph{flexible} variant of combinatorial public projects: a feasible solution is a set $S \sse [m]$ of projects with $|S| \leq k$. Player $i$'s value for outcome $S$ is equal to $v_i(S)$.  The
goal is to choose the feasible set $S$ maximizing \emph{social welfare}:
$\sum_i v_i(S)$.  

We consider Combinatorial Public Projects where each player's valuation $v_i$ is know to lie in some set $\V$ of valuation functions. We abbreviate the set of instances of  CPP constrained to valuations $\V$ as CPP($\V$). As first defined in \cite{PSS08}, CPP was considered with $\V$ equal to the set of monotone submodular functions. In this paper, we focus on CPP with matroid-rank-sum (MRS) valuations --- a large subset of monotone submodular functions.

% \subsection{Optimization Problems}
% \label{sec:problem}

% We consider  optimization problems $\Pi$ of the following general
% form. Each instance of $\Pi$ consists of a \emph{feasible set}  $\S$, and a
% \emph{objective function} $v:\S \to \RR$.  
% %We make no assumptions
% %about $v$ --- it may be nonlinear, nonsubmodular, etc. 
% %We are motivated
% %by problems for which $v(x)$ is the social welfare of outcome $x$. 
% %We do not assume anything about the representation of $\S$ and $v$ --- the
% %representation may be implicit. 
% The solution to an instance of $\Pi$
% is given by the following optimization problem. 

% \begin{lp}\label{lp:problem}
%   \maxi{w(x)}
% \st
% \con{x \in \S.}
% \end{lp}

\subsection{Mechanism Design Basics}\label{sec:MD}

We consider direct-revelation mechanisms for combinatorial public projects. Fix $m$,$n$, and $k$, and let $\S=\{S \sse [m] : |S| \leq k\}$ denote the set of feasible solutions.  A mechanism comprises an {\em
  allocation rule}, which is a function from (hopefully
truthfully) reported valuation functions $v_1,\ldots,v_n:2^{[m]} \to \RR$ to a feasible outcome
$S \in \S$, and a {\em payment rule}, which is a function from
reported valuation functions to a required payment from each player.
We allow the allocation and payment rules to be randomized.

A mechanism with allocation and payment rules $\A$ and $p$ is {\em
  truthful-in-expectation} if every player always maximizes its expected
  payoff by truthfully reporting its valuation function, meaning that
\begin{equation}\label{eq:truthful}
\ex[v_i(\A(v)) - p_i(v)] \geq \ex[ v_i(\A(v'_i,v_{-i})) -
  p_i(v'_i,v_{-i})]
\end{equation}
for every player~$i$, (true) valuation function~$v_i$, (reported)
valuation function~$v'_i$, and (reported) valuation functions~$v_{-i}$ of
the other players.
The expectation in~\eqref{eq:truthful} is over the coin flips of the
mechanism.  
%If the mechanism is deterministic and satisfies this condition, then
%it is simply called {\em truthful}.

The mechanisms that we design can be thought of as randomized
variations on the classical VCG mechanism, as we explain next.  
Recall that the {\em VCG
  mechanism} is defined by the (generally intractable)
allocation rule that selects the welfare-maximizing outcome with
respect to the reported valuation functions, and the payment rule that
charges each player~$i$ a bid-independent ``pivot term'' minus the
reported welfare earned by other players in the selected outcome.  This
(deterministic) mechanism is truthful; see e.g.~\cite{Nis07}.

Now let $dist(\S)$ denote the probability distributions over the feasible set $\S$, and let $\D \sse dist(\S)$ be a compact subset of them.
The corresponding {\em Maximal-In-Distributional-Range (MIDR)} allocation rule is defined as follows: given reported valuation
functions $v_1,\ldots,v_n$,  return an outcome that is sampled randomly
from a distribution $D^* \in \D$ that maximizes the expected welfare
$\ex_{S \sim D}[\sum_i v_i(S)]$ over all distributions $D \in \D$.
Analogous to the VCG mechanism, there is a (randomized) payment rule that
can be coupled with this allocation rule to yield a
truthful-in-expectation mechanism (see~\cite{DD09}).

\subsection{Matroid Rank Sum Valuations}\label{sec:MRS}
%In order to define MRS valuations, we need to first recall basic
%concepts from matroid theory. An overview of the basic concepts we use
%here, such as the definition of matroids and their rank functions, is
We now define matroid rank sum valuations. Relevant concepts from matroid theory are reviewed in Appendix  \ref{app:matroids}. 
\begin{definition}\label{def:mrs}
A set function $v:2^{[m]} \to \RR$ is a \emph{matroid rank sum (MRS)}
function if there exists a family of matroid rank functions
$u_1,\ldots,u_\kappa: 2^{[m]} \to \RR$, and associated non-negative
weights $w_1,\ldots,w_\kappa \in \RR^+$, such that $v(S) =
\sum_{\ell=1}^\kappa w_\ell u_\ell(S)$ for all $S \sse [m]$. 
\end{definition}
%
%If we view a set function on ground set $[m]$ as a vector indexed by
%sets $S \in 2^{[m]}$, it is easy to see that MRS valuations form a
%\emph{convex cone}. Our results rely on the geometric properties of
%this cone, and not on how we represent its member functions. 
We do not assume any particular representation of MRS functions, and
require only oracle access to 
their (expected) values on certain distributions (see Section
\ref{sec:lotteryval}). 
%In particular, the results of Section
%\ref{sec:CA} are agnostic to the number of matroids $\kappa$ in the
%weighted combination above -- it may be arbitrarily large, even
%infinite as far as we care\footnote{We do note, however, that there is
%  a finite -- though doubly exponential  -- number of matroids on any
%  finite ground set. Therefore, every MRS function can theoretically
%  be written as a finite weighted sum of matroid rank functions.}.  
 MRS valuations include most concrete examples of monotone submodular
functions that appear in the literature --- this includes
coverage functions\footnote{A coverage function $f$ on ground set
  $[m]$ designates some set $\Y$, and $m$ subsets $A_1,\ldots,A_m \sse
  \Y$, such that $f(S) = | \union_{\ell \in S} A_\ell|$. We note that
  $\Y$ may be an infinite, yet measurable, space. Coverage functions
  are arguably {\em the} canonical example of a submodular
  function.}, matroid weighted-rank
functions\footnote{This is a generalization of matroid rank functions,
  where weights are placed on elements of the matroid. It is true,
  though not immediately obvious, that a matroid weighted-rank function can be
  expressed as a weighted combination of matroid (unweighted) rank
  functions --- see e.g. \cite{revenuesubmod}.}, and all convex
combinations thereof.  Moreover, as shown in \cite{RazS97}, $1-1/e$
is the best approximation possible for CPP with coverage valuations --- and hence also for MRS valuations --- in polynomial
time, even ignoring strategic considerations. That being said, we note that some interesting submodular functions  --- such as some budget additive functions\footnote{  A set function $f$ on ground set $[m]$  is \emph{budgeted additive} if
  there exists a constant $B\geq 0$ (the budget) such that $f(S) =
  \min(B,\sum_{j \in S} f(\set{j}))$.
} --- are not in the matroid rank sum family.

\subsection{Lotteries and Oracles}
\label{sec:lotteryval}
A \emph{value oracle} for a valuation $v: 2^{[m]} \to \RR$ takes as
input a set $S \sse [m]$, and returns $v(S)$.  We define an analogous
oracle that takes in a description of a simple lottery 
% (where each
%good is included independently with some probability) 
over sets $S 
\sse [m]$, and outputs the expectation of $v$ over this lottery. %The lotteries we consider will be of a very simple form, which we describe next.%: $k$ projects are drawn with replacement from  a probability distribution over projects.
%In
%particular, we use the simplest possible such lotteries: each item is
%included in $S$ independently with some probability. Given a vector
%$x\in [0,1]^m$, let $D_x$ be the distribution over sets $S$ that
%includes $j \in S$ independently with probability $x_j$. 

Let $k \in [m]$, let $R \sse [m]$, and let $x\in [0,1]^m$ be a vector such that $\sum_j x_j \leq 1$. We interpret $x$ as a probability distribution over $[m] \union \set{*}$, where $*$  represents not choosing a project.  Specifically, project $j\in [m]$ is chosen with probability $x_j$, and $*$ is chosen  with probability $1-\sum_j x_j$.  We define a distribution $D^R_k(x)$ over $2^{[m]}$, and call this distribution the \emph{$k$-bounded lottery with marginals $x$ and promise $R$}. We sample $S \sim D^R_k(x)$ as follows: Let $j_1,\ldots, j_k$ be independent draws from $x$, and let $S= R \union \set{j_1,\ldots,j_k} \sm \set{*}$. Essentially, this lottery commits to choosing projects $R$, and adds an additional $k$ projects chosen randomly with replacement from distribution $x$. When $R=\emptyset$, as will be the case through most of this paper, we omit mention of the promised set. We can now define a randomized analogue of a value oracle that returns the expected value of a bounded-lottery.  %We use $G^v_k(x)$ to denote the expected value of $v(S)$ over draws $S \sim
%D_k(x)$ from this lottery.  
\begin{definition}\label{def:bounded_lottery_oracle}
  A \emph{bounded-lottery-value oracle} for set function $v: 2^{[m]} \to \RR$
  takes as input a vector $x \in [0,1]^m$ with $\sum_j x_j \leq 1$, a bound $k \in [m]$, and a set $R \sse [m]$, and outputs $\expect[S \sim D^R_k(x)]{v(S)}$.
%  \begin{equation}\label{eq:G} G^v_k(x) =\expect[S \sim D_k(x)]{v(S)} 
%=  \sum_S v(S) \prod_{j \in S} x_j \prod_{j \neq S}  (1-x_j).
%\end{equation}  
\end{definition}
%Viewed differently, the lottery-value oracle simply evaluates the
%\emph{multi-linear extension} $F_v$ of $v$ (see
%\cite{vondrak_thesis}).  Operating in the lottery-value oracle model
%is tantamount to assuming that players can efficiently ``introspect''
%in order to discover their value for a simple lottery where each item
%is assigned independently. While lottery-value oracles are a natural
%extension of value oracles to randomized allocations,  they cannot in
%general be simulated efficiently by their more traditional
%counterparts. They can, however, be approximated arbitrarily well
%using value oracles by random sampling, as shown in
%\cite{V08}. Unfortunately, the stringent requirements of MIDR
%mechanisms -- namely, that we output \emph{exactly} optimal solutions
%-- do not seem to tolerate such sampling error, however small. This
%motivates our definition of lottery-value oracles.

In our model for CPP, we assume that a player with valuation function $v_i$ can answer bounded-lottery-value oracle queries for $v_i$. A bounded-lottery-value oracle is a generalization of value oracles. Nevertheless, it is the case that a bounded-lottery-value oracle can be implemented using a value oracle for some succinctly 
represented examples of MRS valuations, such as explicit coverage
functions (In similar fashion to \cite[Appendix A]{DRY11}). 

%We now define our \emph{k-lottery-value-oracle model} for combinatorial public projects. We assume that a player with valuation function $v_i:2^{[m]} \to \RR$ can, for each $k \in [m]$, answer $k$-lottery-value-oracle queries for $v_i$. We also assume the player can answer the more traditional value-oracle queries for $v_i$. Moreover, for technical reasons that will  become apparent in Appendix~\ref{sec:solveconvex}\footnote{We believe we can remove these technical requirements in the full version of the paper, removing the need for these additional types of queries.}, we require a player to answer an additional type of simple query. For a project $j \in [m]$, let $v_i^j(S) = v_i(S \union \set{j})$ be the valuation of player $i$ if he is ``promised'' project $j$ up-front. For each project $j$ and $k \in [m]$, we require that player $i$ be able to answer $k$-lottery-value-oracle queries for $v_i^j$ -- i.e. report his expected value for a $k$-lottery $D_k(x)$ if he is additionally promised project $j$ up-front. These types of queries are again easy to implement for various succinctly represented examples of MRS valuations, like explicity represented coverage functions.

More generally we note that  bounded-lottery-value oracles can  be approximated arbitrarily well, with high probability, using value oracles; this is done by random sampling, and we omit the technical details. Unfortunately, we are not able  to reconcile the incurred sampling errors --- small as they may be --- with the requirement that our mechanism be \emph{exactly} truthful. We suspect that relaxing our solution concept to approximate truthfulness -- also known as $\epsilon$-truthfulness -- would remove this difficulty, and allow us to relax our oracle model to the more traditional value oracles.

\subsection{Convex Rounding}
\label{sec:convexrounding}
In this section, we review \emph{convex rounding}, a framework for the design of truthful mechanisms introduced by Dughmi, Roughgarden and Yan~\cite{DRY11}. We present the main definitions and lemmas  as they pertain to combinatorial public projects. For a more thorough and general treatment of convex rounding, we refer the reader to \cite[Section 3]{DRY11}.

We consider the standard integer programming formulation of CPP. There is a variable $x_j\in \set{0,1}$ for each project $j\in[m]$, and the goal is to set at most $k$ of the variables to $1$ so that the welfare $v(x)=\sum_i v_i(\set{j : x_j = 1})$ is maximized. We \emph{relax} this integer program in the obvious way to the polytope $\P = \set{x \in \RR^m : \sum_j x_j \leq k, x \succeq 0}$. We postulate a \emph{rounding scheme} $r$ that maps  points of $\P$ to the feasible solutions $\S=\set{S\sse [m] : |S| \leq k}$ of CPP. We allow $r$ to be randomized, so that $r(x)$ is a distribution over $\S$ for each $x \in \P$. 
%in full, re-insert relax this integer program _in the obvious way_

Traditionally, approximation algorithms optimize an objective $\tilde{v}(x)$ --- often a simple extension of $v$ to $\P$ --- over the set $\P$ of fractional solutions, and then round the optimal fractional point $x^*$ to a solution $r(x^*)$ in the original feasible set $\S$. Many of the best approximation algorithms for various problems are based on this relax-solve-round framework. Unfortunately, however, this approach is almost always incompatible with the design of truthful mechanisms, due to the fact that the rounding step is often unpredictable. Truthful mechanism design, on the other hand, is intimately tied to \emph{exact optimization}, as evidenced by the fact that the vast majority truthful mechanisms for multi-parameter problems are based on the VCG paradigm (see Section \ref{sec:MD}).

In an effort to reconcile the techniques of approximation algorithms and truthful mechanism design, Dughmi, Roughgarden and Yan proposed \emph{optimizing directly on the output of the rounding scheme, rather than on its input}. This defines an optimization problem induced by relaxation $\P$ and rounding scheme $r$. Stated for CPP with the relaxation as described above, the problem is as follows.
\begin{lp}
\label{lp:absorbrounding}
  \maxi{\ex_{S \sim r(x)}[ \sum_i v_i(S)]}
  \st
  \con{\sum_{j=1}^m x_j \leq k}
  \qcon{0 \leq x_j \leq 1}{j=1,\ldots,m}
\end{lp}
They consider a simple allocation rule, which we state for CPP in Algorithm \ref{alg:midr}, that solves \eqref{lp:absorbrounding} optimally. They observe that this allocation rule is maximal-in-distributional-range.
\begin{algorithm}
\caption{MIDR Allocation Rule for CPP}
\label{alg:midr}
\begin{algorithmic}[1]
\PARAMETER $n$,$m$,$k$
\PARAMETER (Randomized) rounding scheme $r$
\INPUT Valuation functions $\set{v_i}_{i=1}^n$
\OUTPUT  A set $S \sse [m]$ with $|S| \leq k$
\STATE Let $x^*$ be an optimal solution to \eqref{lp:absorbrounding}
\STATE Let $S \sim r(x^*)$ \label{algstep:sample}
 \end{algorithmic}
\end{algorithm}
\begin{lemma}[\cite{DRY11}]
 \label{lem:MIDR}
  Algorithm \ref{alg:midr} is an MIDR allocation rule.
\end{lemma}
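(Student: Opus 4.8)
The plan is to exhibit an explicit distributional range and then show that Algorithm~\ref{alg:midr} is exactly optimizing expected welfare over it. The natural candidate is the image of the relaxation polytope under the rounding scheme: set $\D = \set{r(x) : x \in \P}$, a collection of distributions over $\S$. The crucial observation, which I would state first, is that $\D$ depends only on the rounding scheme $r$ and on the parameters $n,m,k$, and not at all on the reported valuations $v_1,\ldots,v_n$ --- this is precisely the defining feature demanded of a distributional range in Section~\ref{sec:MD}.

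Next I would relate the objective of \eqref{lp:absorbrounding} to expected welfare over $\D$. This is immediate from the definitions: the objective value at a feasible point $x \in \P$ is $\ex_{S \sim r(x)}[\sum_i v_i(S)]$, i.e. the expected welfare of the distribution $r(x) \in \D$. Hence, if $x^*$ is an optimal solution to \eqref{lp:absorbrounding}, then for every $x \in \P$ we have $\ex_{S \sim r(x^*)}[\sum_i v_i(S)] \geq \ex_{S \sim r(x)}[\sum_i v_i(S)]$, and since every $D \in \D$ equals $r(x)$ for some $x \in \P$, this says exactly that $r(x^*)$ maximizes $\ex_{S \sim D}[\sum_i v_i(S)]$ over all $D \in \D$. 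Because the algorithm returns a sample from $r(x^*)$, it is an MIDR allocation rule for the range $\D$, and the (randomized VCG-style) payment rule of \cite{DD09} can then be appended to obtain truthfulness-in-expectation.

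The only point needing genuine care --- and the step I expect to be the mild obstacle --- is the existence of the optimal solution $x^*$, equivalently the requirement (from Section~\ref{sec:MD}) that $\D$ be compact, or at least that the supremum of expected welfare over $\D$ be attained. I would dispatch this by noting that $\P$ is compact and that $x \mapsto \ex_{S \sim r(x)}[v(S)]$ is continuous, so that a maximizer exists; for the rounding schemes constructed in this paper the objective is in fact continuous (indeed concave, as shown later), so this is not a real difficulty. Modulo that regularity remark, the lemma is essentially definitional: the substance of the convex rounding framework lies not in this step but in designing a rounding scheme $r$ for which \eqref{lp:absorbrounding} is simultaneously a tractable (convex) optimization problem and a good approximation.
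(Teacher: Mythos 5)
Your proposal is correct and follows exactly the intended argument: the paper gives no proof of this lemma, simply citing \cite{DRY11}, and the argument there is precisely yours --- take the range $\D = \set{r(x) : x \in \P}$, which is fixed independently of the reported valuations, observe that the objective of \eqref{lp:absorbrounding} at $x$ is the expected welfare of $r(x)$, so sampling from $r(x^*)$ samples from a welfare-maximizing distribution in $\D$. Your regularity remark (continuity of $x \mapsto \ex_{S \sim r(x)}[v(S)]$ on the compact $\P$, guaranteeing an optimizer and compactness of the range) is the right way to handle the one technical point, and it indeed holds for the rounding schemes used in the paper.
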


For $\alpha \leq 1$, we say that the rounding scheme $r$ for CPP($\V$) is
\emph{$\alpha$-approximate} if, whenever $x$ is an integer point of $\P$ corresponding to a set $S \in \S$, and $v_i \in \V$ for each $i$, we have that $\ex_{T \sim r(x)}[\sum_i v_i(T)] \geq \alpha \sum_i v_i(S)$. In other words, rounding does not degrade the quality of an integer solution by more than $\alpha$. Given the definition of Algorithm \ref{alg:midr}, it is easy to conclude the following lemma.

\begin{lemma}[\cite{DRY11}] \label{lem:approx}
If $r$ is an $\alpha$-approximate rounding scheme for CPP($\V$), then Algorithm
\ref{alg:midr} is an $\alpha$-approximation algorithm for CPP($\V$).
\end{lemma}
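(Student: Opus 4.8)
The plan is to compare the expected welfare produced by Algorithm \ref{alg:midr} against the welfare of an optimal integral solution, exploiting that the indicator vector of that solution is itself a feasible (integer) point of $\P$, together with the fact that $x^*$ is optimal for \eqref{lp:absorbrounding}.

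First I would fix an arbitrary instance of CPP($\V$) with $v_i \in \V$ for each $i$, and let $S^\star \in \S$ be a welfare-maximizing feasible set, so that $\mathrm{OPT} := \max_{S \in \S}\sum_i v_i(S) = \sum_i v_i(S^\star)$. Let $\one_{S^\star} \in \set{0,1}^m$ be the indicator vector of $S^\star$. Since $|S^\star| \leq k$, this is an integer point of $\P$ corresponding to $S^\star$, so it is a feasible solution to \eqref{lp:absorbrounding}.

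Next, applying the $\alpha$-approximation property of the rounding scheme $r$ at the integer point $\one_{S^\star}$ yields $\ex_{T \sim r(\one_{S^\star})}[\sum_i v_i(T)] \geq \alpha \sum_i v_i(S^\star) = \alpha \cdot \mathrm{OPT}$. Because $x^*$ is an optimal solution to \eqref{lp:absorbrounding} and $\one_{S^\star}$ is feasible for it, the objective value at $x^*$ is at least that at $\one_{S^\star}$, i.e. $\ex_{S \sim r(x^*)}[\sum_i v_i(S)] \geq \ex_{T \sim r(\one_{S^\star})}[\sum_i v_i(T)] \geq \alpha \cdot \mathrm{OPT}$. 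Finally, since $r$ maps every point of $\P$ to a distribution over $\S$, the output of Algorithm \ref{alg:midr} lies in $\S$ with probability one; combined with the welfare bound just established, this shows Algorithm \ref{alg:midr} always returns a feasible solution with expected welfare at least $\alpha \cdot \mathrm{OPT}$, which is precisely the claim.

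I do not expect any genuine obstacle: the lemma follows by unwinding the definitions of an $\alpha$-approximate rounding scheme and of Algorithm \ref{alg:midr}. The only point requiring mild care is that the comparison point fed to the definition of $\alpha$-approximation must be an \emph{integer} point of $\P$, which is why the argument uses the indicator vector $\one_{S^\star}$ of the optimal set rather than an arbitrary (possibly fractional) maximizer over $\P$.
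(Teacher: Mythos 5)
Your proof is correct and is exactly the argument intended by the paper (which defers it to \cite{DRY11} with the remark that it follows directly from the definition of Algorithm \ref{alg:midr}): the indicator vector of an optimal set $S^\star$ is a feasible integer point of \eqref{lp:absorbrounding}, the $\alpha$-approximation property of $r$ lower-bounds its objective value by $\alpha\cdot\mathrm{OPT}$, and optimality of $x^*$ transfers that bound to the expected welfare of the sampled output. Nothing further is needed.
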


For reasons outlined in \cite{DRY11}, implementing Algorithm \ref{alg:midr} efficiently is impossible for most rounding schemes $r$ in the literature. To get around this difficulty, they advocate designing rounding schemes that render \eqref{lp:absorbrounding} a convex optimization problem.

\begin{definition}
  Consider a randomized rounding scheme $r: \P \to dist(\S)$. We say $r$ is a \emph{convex rounding scheme} for CPP($\V$) if, whenever $v_i \in \V$ for all $i$, the objective $\ex_{S \sim r(x)}[\sum_i v_i(S)]$ is a concave function of $x$.
\end{definition}

\begin{lemma}\label{lem:convexopt}
When $r$ is a convex rounding scheme for CPP($\V$) ,
\eqref{lp:absorbrounding} is a convex optimization problem for each instance of CPP($\V$).
\end{lemma}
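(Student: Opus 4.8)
The plan is to unwind the two definitions involved and observe that the statement is essentially a bookkeeping consequence of the notion of a convex rounding scheme. Recall that a convex optimization problem is one that maximizes a concave objective (equivalently, minimizes a convex objective) over a convex feasible set; so it suffices to verify these two ingredients for \eqref{lp:absorbrounding}.

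First I would note that the feasible region of \eqref{lp:absorbrounding} is exactly the polytope $\P = \set{x \in \RR^m : \sum_{j} x_j \leq k,\ 0 \leq x_j \leq 1}$, which is an intersection of finitely many closed half-spaces and hence a bounded convex set; this holds for every instance and is independent of $r$. Second, I would invoke the hypothesis that $r$ is a convex rounding scheme for CPP($\V$): by definition, whenever every reported valuation $v_i$ lies in $\V$, the map $x \mapsto \ex_{S \sim r(x)}[\sum_i v_i(S)]$ is concave on $\P$. Since \eqref{lp:absorbrounding} maximizes precisely this function over $\P$, it is the maximization of a concave function over a convex set, i.e. a convex optimization problem, and this conclusion is uniform over all instances of CPP($\V$) because the defining property of a convex rounding scheme is quantified over all such instances.

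There is no substantive obstacle here: the entire content of the lemma is packaged into the \emph{definition} of a convex rounding scheme, and the lemma merely records that this definition has the intended effect of turning Algorithm~\ref{alg:midr}'s optimization step into a tractable convex program. The genuinely hard work — exhibiting a concrete rounding scheme for CPP with matroid rank sum valuations and proving that its induced objective is concave — is deferred to the later sections.
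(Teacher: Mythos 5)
Your argument is correct and is exactly the reasoning the paper relies on: the paper states Lemma \ref{lem:convexopt} without proof because, as you observe, it follows immediately from the definition of a convex rounding scheme (concavity of $x \mapsto \ex_{S \sim r(x)}[\sum_i v_i(S)]$ on $\P$ whenever the $v_i$ lie in $\V$) together with the fact that the feasible region of \eqref{lp:absorbrounding} is a compact convex polytope. Nothing further is needed.
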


Under additional technical conditions, discussed in the context of combinatorial public projects in Appendix \ref{sec:solveconvex}, convex program \eqref{lp:absorbrounding}
can be solved efficiently (e.g., using the ellipsoid method). 
This reduces the design of a polynomial-time $\alpha$-approximate MIDR
algorithm to designing a polynomial-time $\alpha$-approximate convex
rounding scheme.

Summarizing, Lemmas
\ref{lem:MIDR}, \ref{lem:approx}, and \ref{lem:convexopt} give the
following informal theorem. 

\begin{theorem}[Informal]
If there exists an $\alpha$-approximate  convex rounding scheme for CPP($\V$), then there exists a truthful-in-expectation, polynomial-time,
$\alpha$-approximate mechanism for CPP($\V$).
\end{theorem}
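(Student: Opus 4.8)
The plan is to prove this by chaining together the three lemmas already in hand --- Lemmas \ref{lem:MIDR}, \ref{lem:approx}, \ref{lem:convexopt} --- and then appealing to the standard VCG-type payment machinery for MIDR allocation rules. Fixing a polynomial-time $\alpha$-approximate convex rounding scheme $r$ for CPP($\V$), I would analyze the mechanism whose allocation rule is Algorithm \ref{alg:midr} instantiated with $r$, paired with a payment rule to be specified.

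First I would establish three properties of the allocation rule. (i) \emph{MIDR}: immediate from Lemma \ref{lem:MIDR}; the distributional range $\D = \{\,r(x):x\in\P\,\}$ is fixed before any valuations are seen, and the rule returns a sample from the welfare-maximizing distribution in $\D$. (ii) \emph{$\alpha$-approximation}: immediate from Lemma \ref{lem:approx}, since $r$ is $\alpha$-approximate and \eqref{lp:absorbrounding} optimizes over a set that includes the integer points of $\P$, hence over the rounded images of all feasible sets $S\in\S$. (iii) \emph{Expected polynomial running time}: by Lemma \ref{lem:convexopt} the program \eqref{lp:absorbrounding} is convex, and I would invoke the technical conditions developed in Appendix \ref{sec:solveconvex} --- boundedness and Lipschitz continuity of the objective, a first-order/separation oracle obtained from the bounded-lottery-value oracles, and the ability to recover an exact optimizer --- to conclude it is solvable in polynomial time (e.g.\ by the ellipsoid method); the subsequent sampling step $S\sim r(x^*)$ runs in polynomial time because $r$ is a polynomial-time rounding scheme.

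Next I would add payments. I would pair Algorithm \ref{alg:midr} with the randomized VCG-style payment rule for MIDR rules described in Section \ref{sec:MD} and \cite{DD09}: player $i$ pays a bid-independent pivot term minus the expected reported welfare of the other players under the selected distribution. By the argument of \cite{DD09}, this renders the mechanism truthful-in-expectation, i.e.\ it satisfies \eqref{eq:truthful} for every player $i$, true valuation $v_i$, reported valuation $v_i'$, and reported profile $v_{-i}$; and the payments are computable in polynomial time by running the (polynomial-time) allocation rule a constant number of times per player and evaluating the relevant expectations through the bounded-lottery-value oracles. Assembling (i)--(iii) with this payment rule yields the claimed truthful-in-expectation, polynomial-time, $\alpha$-approximate mechanism for CPP($\V$).

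The main obstacle lies entirely inside step (iii), and is the reason the theorem is stated only informally here: exact truthfulness-in-expectation demands that the allocation rule output a distribution \emph{exactly} maximizing expected welfare over $\D$, whereas the ellipsoid method returns only a near-optimal fractional point of $\P$. Reconciling this --- exploiting the structure of \eqref{lp:absorbrounding} to convert a near-optimal solution into an exact optimizer, and quantifying the numerical precision required --- is the delicate part; everything else is bookkeeping over the lemmas already proved. I would develop those details in Appendix \ref{sec:solveconvex} and treat them as a black box here.
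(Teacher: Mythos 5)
Your proposal matches the paper's own treatment of this (deliberately informal) theorem: the paper likewise obtains it by combining Lemmas \ref{lem:MIDR}, \ref{lem:approx}, and \ref{lem:convexopt} with the VCG-style payments for MIDR rules from \cite{DD09} (computed as in Lemma \ref{lem:compute_payments}), and defers the ellipsoid-method/exact-optimization issues to Appendix \ref{sec:solveconvex}. The only slight difference is in how that black box is resolved: the appendix never converts a near-optimal point into an exact optimizer, but instead simulates exact sampling from $r_k(x^*)$ with adaptively chosen precision (plus the perturbed scheme $r_k^+$ to ensure good conditioning), which is why the final runtime guarantee is expected polynomial time, as you correctly anticipated in step (iii).
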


\section{The Mechanism}\label{sec:CPP}
In this section, we prove the main result.

\begin{thm}\label{thm:CPPmain}
There is a $(1-1/e)$-approximate, truthful-in-expectation
mechanism for combinatorial public projects with matroid rank sum valuations in the
bounded-lottery-value oracle model, running in expected $\poly(n,m)$ time.
\end{thm}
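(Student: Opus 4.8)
The plan is to apply the convex rounding framework summarized in the informal theorem at the end of Section~\ref{sec:convexrounding}: it suffices to exhibit a randomized rounding scheme $r\colon\P\to dist(\S)$ for CPP($\V$), where $\V$ is the class of MRS valuations, that is (i) polynomial-time samplable given bounded-lottery-value oracle access, (ii) $(1-1/e)$-approximate, and (iii) convex, i.e. $\ex_{S\sim r(x)}[\sum_i v_i(S)]$ is concave in $x\in\P$. The natural candidate, mirroring \cite{DRY11} but adapted to the cardinality constraint, is \emph{independent-sampling-with-replacement}: given $x\in\P$ with $\sum_j x_j\le k$, rescale to $\bar x = x/k$ (so $\sum_j \bar x_j\le 1$) and output $S\sim D_k(\bar x)$, i.e.\ take $k$ i.i.d.\ draws $j_1,\dots,j_k$ from the distribution on $[m]\cup\{*\}$ with marginals $\bar x$ and set $S=\{j_1,\dots,j_k\}\setminus\{*\}$. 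Sampling is trivially polynomial-time, and the objective $\ex_{S\sim r(x)}[\sum_i v_i(S)]$ is exactly $\sum_i$ of bounded-lottery-values $\expect[S\sim D_k(x/k)]{v_i(S)}$, so the mechanism only needs the oracle of Definition~\ref{def:bounded_lottery_oracle}.

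For the approximation guarantee (ii): if $x$ is the indicator of a feasible set $S$ with $|S|=t\le k$, then $\bar x$ is uniform on $S$ with mass $1/k$ each, and a single draw lands in $S$ with probability $t/k$. For a monotone submodular (in particular MRS) $v_i$, a standard calculation — identical in spirit to the analysis of the greedy/continuous-greedy $(1-1/e)$ bound — shows $\ex[v_i(r(x))] \ge (1-(1-1/k)^{k})\,v_i(S) \ge (1-1/e)\,v_i(S)$; the cleanest route is to couple the $k$ draws to a process that reveals elements of $S$ one at a time and use submodularity of marginal gains, or to invoke the known bound on the multilinear-type extension evaluated at a uniform point. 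Summing over $i$ gives $(1-1/e)$-approximation. (One must also handle $t\le k$ versus $t=k$, but monotonicity makes $t<k$ only better.)

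The heart of the paper — and the step I expect to be the main obstacle — is concavity (iii). Unlike combinatorial auctions, the rounding here is \emph{dependent} across coordinates (the $k$ draws share the single distribution $\bar x$, and rescaling by $1/k$ couples all coordinates), so the clean multilinearity/concavity arguments for independent rounding do not apply. Since MRS valuations are non-negative combinations of matroid rank functions and concavity is preserved under non-negative combinations, it suffices to prove that for a single matroid rank function $u$ on $[m]$ the map $x\mapsto \expect[S\sim D_k(x/k)]{u(S)}$ is concave on $\P$. Writing $g(\bar x)=\expect[S\sim D_k(\bar x)]{u(S)}$ with $\bar x$ ranging over the simplex $\{\bar x\succeq 0,\ \sum_j\bar x_j\le 1\}$, this is a polynomial in $\bar x$ (degree $\le k$), and the goal is to show its Hessian is negative semidefinite. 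The plan is to expand $u(\{j_1,\dots,j_k\})$ — the rank of a random multiset — using matroid structure: condition on which of the $k$ draws are ``new'' relative to the earlier ones, or equivalently express $u(S)$ via a telescoping sum of marginal rank increments $u(\{j_1,\dots,j_\ell\})-u(\{j_1,\dots,j_{\ell-1}\})$, each of which is an indicator that $j_\ell$ is a coloop over $\{j_1,\dots,j_{\ell-1}\}$. Taking expectations and differentiating twice, one hopes the cross terms organize into a sum of terms each manifestly contributing a negative-semidefinite rank-one (or low-rank) piece; matroid exchange/submodularity of $u$ should be exactly what makes the mixed second partials dominate the pure ones in the right direction. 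I would structure this as: (a) reduce to a single matroid rank function; (b) give a combinatorial formula for $g(\bar x)$ (e.g.\ via inclusion–exclusion over circuits/flats, or via the ``probability that element $j$ is spanned'' decomposition); (c) compute $\nabla^2 g$ and verify $y^\top\nabla^2 g\,y\le 0$ for all $y$ with $\sum_j y_j=0$ (tangent directions to the simplex), using a matroid-theoretic inequality such as the submodularity of $u$ or the negative correlation properties of matroid-constrained sampling. Appendix~\ref{app:matroids} supplies the matroid facts; the convex-analytic bookkeeping (closure under conic combinations, behavior under the affine reparametrization $x\mapsto x/k$, and the boundary case $\sum_j x_j=k$) is routine. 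Combining the samplability, the $(1-1/e)$ bound, and concavity with the informal theorem yields Theorem~\ref{thm:CPPmain}.
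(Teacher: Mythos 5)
Your setup matches the paper exactly: the rounding scheme you propose (rescale to $x/k$ and take $k$ i.i.d.\ draws with replacement from the induced distribution on $[m]\cup\{*\}$) is precisely the paper's $k$-bounded-lottery rounding scheme $r_k$, and your approximation argument (each chosen project survives with probability $1-(1-1/k)^k\ge 1-1/e$, then submodularity) is the paper's Lemma~\ref{lem:klotteryapprox}. But the heart of the theorem --- concavity of $x\mapsto\ex_{S\sim r_k(x)}[v(S)]$ --- is not actually proved in your proposal; it is a description of hoped-for structure (``one hopes the cross terms organize into \ldots negative-semidefinite pieces''). The paper's proof has a specific mechanism that your sketch does not identify: via inclusion--exclusion one gets the closed form $\pr[r_k(x)=S]=(-1)^{|S|}\sum_{R\subseteq S}(-1)^{|R|}(1-x_{\bar R}/k)^k$ (Claim~\ref{claim:prob_rewrite}), and differentiating twice shows that the Hessian of $G^v_k$ is exactly $\frac{k-1}{k}\sum_{S}\pr[r_{k-2}(\tfrac{k-2}{k}x)=S]\,\H^v_S$, a \emph{nonnegative combination of discrete Hessian matrices} of $v$ (Claim~\ref{claim:G_grad}); negative semidefiniteness then follows from the DRY11 fact that discrete Hessians of MRS functions are NSD. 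Your proposed substitutes --- ``submodularity of $u$ or negative correlation properties of matroid-constrained sampling'' --- cannot carry this step: submodularity alone is provably insufficient (the paper stresses that some submodular functions, e.g.\ budget-additive ones, lie outside MRS and the convexity claim is tied to the MRS structure, i.e.\ to NSD discrete Hessians), and the sampling here is plain i.i.d.\ sampling, not matroid-constrained sampling, so negative-correlation results for matroids do not apply. A smaller but real error: you propose checking $y^\top\nabla^2 g\,y\le 0$ only for directions with $\sum_j y_j=0$; the feasible region $\{\bar x\succeq 0,\ \sum_j\bar x_j\le 1\}$ is full-dimensional, so concavity must hold in all directions.

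A second gap is your reliance on the informal theorem as if it were unconditional. The convex program can only be solved approximately (the optimum may be irrational), while MIDR truthfulness requires \emph{exact} maximization over the range. The paper spends Appendix~\ref{sec:solveconvex} on this: a first-order oracle built from two bounded-lottery-value queries per coordinate (using the derivative identity that reduces $\partial G^{v_i}_k/\partial x_j$ to $(k-1)$-bounded lotteries with and without promise $\{j\}$), an expected-polynomial-time simulation of $r_k(x^*)$ that refines the accuracy of the approximate optimum only when a random threshold lands in an ``uncertainty zone'', a perturbed rounding scheme $r_k^+$ that guarantees the curvature lower bound this argument needs, and a random composition with an exponential-time exact solver to recover the clean $1-1/e$ ratio. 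None of this appears in your proposal, and without it you have an MIDR allocation rule in principle but no expected-polynomial-time, exactly truthful-in-expectation mechanism, which is what the theorem claims.
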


We structure the proof of Theorem \ref{thm:CPPmain} as follows. We
define the \emph{$k$-bounded-lottery rounding scheme}, which we denote by $r_k$, in Section
\ref{sec:klotteryrounding}. We prove that $r_k$ is
$(1-1/e)$-approximate (Lemma \ref{lem:klotteryapprox}), and convex
(Lemma \ref{lem:klotteryconvex}).  Lemmas \ref{lem:MIDR},
\ref{lem:approx} and \ref{lem:klotteryapprox}, taken together, imply
that Algorithm \ref{alg:midr} when instantiated  with $r=r_k$, is a $(1-1/e)$-approximate
MIDR allocation rule. Lemma \ref{lem:klotteryconvex} reduces implementing this allocation rule  to solving a convex program.

In Appendix \ref{sec:solveconvex}, we handle the technical and
numerical issues related to solving convex programs. First, we prove
that our instantiation of Algorithm \ref{alg:midr} can be implemented
in expected polynomial-time using the ellipsoid method under a
simplifying assumption on the numerical conditioning of our convex
program (Lemma \ref{lem:solveconvexconditioned}). Then we show in Section
\ref{sec:noise} that the previous assumption can be removed by
slightly modifying our algorithm.
%at the cost of $o(1)$ in the approximation ratio of Lemma \ref{lem:approx} and 
%without otherwise
%affecting our results.   
%payments nontrivial?? Should I uncomment this statment and write up
%payment lemma? 

Finally, we prove that truth-telling VCG payments can be computed
efficiently in Lemma \ref{lem:compute_payments}.  
Taken together, these lemmas complete the proof of Theorem
\ref{thm:CPPmain}. 

\subsection{The $k$-Bounded-Lottery Rounding Scheme}
\label{sec:klotteryrounding}
We devise a rounding scheme $r_k$ that we term the \emph{$k$-bounded-lottery
  rounding scheme}.  Given a feasible solution $x$ to linear program \eqref{lp:absorbrounding}, we let distribution $r_k(x)$ be the $k$-bounded-lottery with marginals $x/k$ (and promise $\emptyset$), as defined in Section \ref{sec:lotteryval}. We make this more explicit in Algorithm \ref{alg:round}. 

\begin{algorithm}
\caption{The $k$-Bounded-Lottery Rounding Scheme $r_k$}
\label{alg:round}
\begin{algorithmic}[1]
\INPUT Fractional solution $x\in \RR^m$ with $\sum_j x_{j} \leq k$, and $0 \leq x_{j} \leq 1$ for all $j$.
\OUTPUT $S \sse [m]$ with $|S| \leq k$
\STATE For each $j \in [m]$ designate the interval $I_j =[\frac{1}{k}\sum_{j' < j} x_{j'}, \frac{1}{k}\sum_{j' \leq j} x_{j'}]$ of length $\frac{x_j}{k}$
\STATE Draw $p_1,\ldots,p_k$ independently and uniformly from $[0,1]$
\STATE Let $S = \set{j \in [m] : \set{p_1,\ldots,p_k} \intersect I_j \neq \emptyset}$
% %\STATE $S=\emptyset$
% \FOR{$t=1,\ldots,k$}
% \IF{$\sum_j x_j/k \geq p_t$}
% \STATE Let $j_t$ be the minimum index such that $\sum_{j \leq j_t} x_j/k \geq p_t$.
% \ELSE 
% \STATE Let $j_t = *$
% \ENDIF
% \ENDFOR
% \STATE $S = \set{j_1,\ldots,j_k} \sm \set{*}$ 
 \end{algorithmic}
\end{algorithm}

The $k$-bounded-lottery rounding scheme is $(1-1/e)$ approximate and convex. We prove the approximation lemma below. As for convexity, we present a simplified proof for the special case of coverage valuations in Section \ref{sec:coverageconvex}, and present the proof for MRS valuations in Section \ref{sec:mrsconvex}.

\begin{lemma}\label{lem:klotteryconvex}
 The $k$-bounded-lottery rounding scheme is convex for CPP with MRS valuations.
\end{lemma}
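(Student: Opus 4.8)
The plan is to show that for an MRS valuation $v = \sum_\ell w_\ell u_\ell$, the map $x \mapsto \ex_{S \sim r_k(x)}[v(S)]$ is concave on $\P$; since a sum (over players $i$ and over the matroid terms $\ell$, with nonnegative weights) of concave functions is concave, it suffices to prove this for a single matroid rank function $u$ with rank function of a matroid $\mathcal{M}$ on $[m]$. Fix such a $u$. By construction, $S \sim r_k(x)$ is obtained by taking $k$ independent samples $j_1,\dots,j_k$ from the distribution on $[m] \cup \{*\}$ with $\Pr[j_t = j] = x_j/k$, and setting $S = \{j_1,\dots,j_k\} \setminus \{*\}$. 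The first step is therefore to get a clean closed form for $g(x) := \ex[u(S)]$ as a function of $x$ that exposes its analytic structure. The natural route is to condition on the \emph{set of distinct elements drawn}, or better, to use the identity $u(S) = \sum_{j} \big(u(S \cap \{1,\dots,j\}) - u(S \cap \{1,\dots,j-1\})\big)$ together with inclusion-exclusion, but the cleanest handle is the observation that $u(S)$ depends only on which elements appear among $j_1,\dots,j_k$, and the probability that a given set $T \subseteq [m]$ is exactly the set of drawn elements is a polynomial in $x$.

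The key idea I would pursue is to exploit concavity of $u$ along a single scalar direction first, then combine. Concretely, I would try to write $g(x)$ as an expectation, over the random \emph{multiset} of draws, of an expression that is manifestly concave in $x$; a promising device is to express $\ex[u(S)]$ via the "greedy/probabilistic" formula for matroid rank — namely, $u(S) = \ex_{\pi}[\#\{j \in S : j \notin \mathrm{span}_{\mathcal{M}}(\{j' \in S : \pi(j') < \pi(j)\})\}]$ for a uniformly random permutation $\pi$ of $[m]$ — so that $g(x) = \ex_\pi \sum_j \Pr_S[j \text{ is "new" given } \pi]$. For a fixed $\pi$ and fixed $j$, the event "$j$ is drawn and is not in the span of the earlier-in-$\pi$ drawn elements" has a probability that I can compute by conditioning on which of the $k$ draws hit $j$ versus the relevant earlier elements; the hope is that each such term, summed appropriately, is a concave function of $x$ because it takes the form $h\big(\sum_{j' \in A} x_{j'}\big) - h\big(\sum_{j' \in B} x_{j'}\big)$ for nested sets $B \subseteq A$ coming from flats, with $h$ concave — exactly the kind of structure that matroid theory (closure, flats) is there to supply, and which convex analysis (composition with affine maps, telescoping over a chain of flats) turns into global concavity.

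The main obstacle, as the authors themselves flag, is precisely that $r_k$ is a \emph{dependent} rounding scheme: the $k$ draws are i.i.d., so distinct elements of $S$ are negatively correlated, and the usual multilinear-extension machinery (which relies on independent rounding and gives concavity along positive directions via submodularity) does not directly apply. So the crux will be controlling the cross-terms introduced by two distinct draws landing in two intervals $I_j, I_{j'}$ whose rank interaction is governed by the matroid. I expect the heart of the argument to be a lemma asserting that for a matroid rank function $u$, the function $t \mapsto \ex[u(S_t)]$, where $S_t$ is the set of distinct elements among $k$ i.i.d. draws from a distribution parametrized affinely by $t$ along a feasible segment of $\P$, has nonpositive second derivative — proved by differentiating the polynomial form of $g$ twice and invoking submodularity of $u$ on the induced coefficient inequalities, or equivalently by reducing to the rank-one/uniform-matroid case and then "lifting" via the fact that every matroid rank function is a combination of such along flats. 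I would present the coverage-function special case first (Section \ref{sec:coverageconvex}), where $u(S) = |\bigcup_{j \in S} A_j|$ and $g(x) = \ex[\text{measure covered}]$ has the transparent form $\int_{\mathcal{Y}} \big(1 - \prod(\text{miss probabilities})\big)$, making the second-derivative computation a short calculation in a single pooled variable; then handle general MRS by the flat-decomposition reduction to this case.
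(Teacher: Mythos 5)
Your reduction to a single matroid rank function and your coverage warm-up (the pooled variable $\sum_{j\in T_\ell}x_j$ composed with $1-(1-y/k)^k$) do match the paper's Section~\ref{sec:coverageconvex}. But for general matroid rank functions the proposal has a genuine gap exactly at the point that carries all the difficulty. Your permutation/greedy expansion $u(S)=\#\{j\in S: j\notin \mathrm{span}(\{j'\in S:\pi(j')<\pi(j)\})\}$ is a valid identity, but the per-term probability $\Pr[\,j\in S \text{ and } j\notin\mathrm{span}(S\cap P_j)\,]$ is \emph{not} of the form $h(\sum_{j'\in A}x_{j'})-h(\sum_{j'\in B}x_{j'})$ for fixed nested sets: whether the drawn predecessors span $j$ depends on \emph{which} subset of $P_j$ was drawn (there may be many circuits through $j$), so the event is a matroid-dependent combinatorial event over the lattice of flats, not an affine-aggregate event. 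Even if it were of that form, a difference of concave functions is not concave, and you offer no mechanism (the ``telescoping over a chain of flats'' is only a hope) by which the cross-terms cancel. Your fallback — that every matroid rank function decomposes as a nonnegative combination of rank-one/uniform pieces along flats, reducing to the coverage case — is unjustified and false in general (e.g.\ graphic matroid rank functions are not nonnegative combinations of uniform-matroid ranks), so the general MRS case is not established.

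For contrast, the paper avoids per-element spanning events entirely: it uses inclusion--exclusion to get the closed form $\Pr[r_k(x)=S]$ (Claim~\ref{claim:prob_rewrite}), differentiates twice to show the exact identity
\begin{equation*}
\grad^2 G^v_k(x) \;=\; \frac{k-1}{k}\sum_{S\subseteq[m]}\Pr\!\left[r_{k-2}\!\left(\tfrac{k-2}{k}x\right)=S\right]\H^v_S
\end{equation*}
(Claim~\ref{claim:G_grad}), i.e.\ the continuous Hessian is a nonnegative mixture of \emph{discrete} Hessians of $v$, and then invokes the fact from \cite{DRY11} that discrete Hessians of MRS functions are negative semidefinite (Claim~\ref{claim:discrete_convex}). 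That last matroid-theoretic fact, or some substitute for it, is the ingredient your sketch is missing; without it (or a worked-out replacement for your telescoping step), the concavity claim for general MRS valuations remains unproved.
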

\begin{lemma}\label{lem:klotteryapprox}
The $k$-bounded-lottery rounding scheme is $(1-1/e)$-approximate when valuations are submodular.
\end{lemma}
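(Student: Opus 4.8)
The plan is to reduce the multi-player claim to a single monotone submodular valuation $v$ (by additivity of welfare over players, it suffices to show $\ex_{T \sim r_k(x)}[v(T)] \geq (1-1/e)\,v(S)$ whenever $x$ is the indicator vector of a feasible set $S$ with $|S| \leq k$), and then exploit the explicit description of $r_k$ from Algorithm~\ref{alg:round}. When $x = \mathbf{1}_S$, each interval $I_j$ has length $1/k$ for $j \in S$ and length $0$ for $j \notin S$; the intervals for $j \in S$ tile $[0,1]$ exactly. Hence drawing $p_1,\dots,p_k$ uniformly from $[0,1]$ and setting $T = \{ j \in S : \{p_1,\dots,p_k\} \cap I_j \neq \emptyset\}$ is precisely the classical ``throw $k$ balls into $k$ bins'' (equivalently $D^{\emptyset}_k(x/k)$ restricted to $S$): $T$ is the set of occupied bins, so $T \sse S$ and each particular $j \in S$ is \emph{missed} with probability $(1 - 1/k)^k$.

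First I would establish the identity $\Pr[j \in T] = 1 - (1-1/k)^k$ for every $j \in S$, which follows from independence of the $p_\ell$'s and $\Pr[p_\ell \notin I_j] = 1 - 1/k$. More generally, I would want the stronger statement that $T$ is distributed as a uniformly random subset obtained by a \emph{coupled} process dominating a random set where each element of $S$ is included independently with probability $q := 1-(1-1/k)^k$; but rather than fuss with correlations, the cleaner route is to invoke a coupling / inclusion argument: the occupied-bins set $T$ stochastically dominates (in the sense of containment) the set $T'$ obtained by independently keeping each $j \in S$ with probability $q$. This is a standard negative-correlation / balls-in-bins fact, and since $v$ is monotone, $\ex[v(T)] \geq \ex[v(T')]$.

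Next I would apply the standard lower bound on the expectation of a monotone submodular function over an independent (product) rounding: if each element of $S$ is kept independently with probability $q$, then $\ex[v(T')] \geq q \cdot v(S)$ — this is immediate from submodularity via the chain $v(S) - \ex[v(T')] \leq \sum_{j \in S}(1-q)\big(v(S) - v(S \sm j)\big) \leq (1-q)\,v(S)$, or simply by convexity/concavity of the multilinear extension along the segment from $\mathbf{0}$ to $\mathbf{1}_S$. Combining, $\ex[v(T)] \geq q\, v(S) = \big(1 - (1-1/k)^k\big) v(S)$. Finally, since $(1-1/k)^k \leq 1/e$ for all integers $k \geq 1$, we get $\ex[v(T)] \geq (1 - 1/e)\, v(S)$, and summing over players gives the lemma.

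The main obstacle is the correlation in $T$: the events $\{j \in T\}$ for $j \in S$ are \emph{not} independent (they are negatively associated), so one cannot directly write $\ex[v(T)]$ as $v$ evaluated on a product distribution. The work is in justifying the reduction to the independent case — either via the containment-coupling described above, or by directly arguing that for monotone submodular $v$ a negatively-correlated rounding is at least as good as the corresponding independent rounding with the same marginals. I expect this step to be short but to require care; everything else (the balls-in-bins marginal computation, the submodularity bound, and the inequality $(1-1/k)^k \le 1/e$) is routine.
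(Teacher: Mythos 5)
Your overall strategy---compute the per-element marginal $1-(1-1/k)^k \geq 1-1/e$ and then appeal to submodularity---is the same as the paper's, but your handling of the correlations contains a genuine error. The coupling you propose does not exist: the occupied-bins set $T$ does \emph{not} stochastically dominate, in the containment order, the set $T'$ obtained by including each $j \in S$ independently with probability $q=1-(1-1/k)^k$. Take $k=2$ and $S=\set{1,2}$: then $\pr[T = S] = 2!/2^2 = 1/2$, whereas $\pr[T' = S] = (3/4)^2 = 9/16 > 1/2$, so the increasing event $\set{T \supseteq S}$ already violates domination, and no coupling with $T \supseteq T'$ can exist. Negative association of balls-in-bins occupancies gives concentration-type consequences, not domination over the product measure with the same marginals; likewise your fallback claim that ``a negatively-correlated rounding is at least as good as the independent rounding'' for monotone submodular objectives is not a standard fact you can simply invoke. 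A secondary issue: the explicit chain you give for the independent case is false as written --- with $v(\set{1})=v(\set{2})=v(\set{1,2})=1$ and $q=1/2$, the left side is $1/4$ while $\sum_{j\in S}(1-q)\bigl(v(S)-v(S \sm j)\bigr)=0$; the correct justification there is the concavity of the multilinear extension along nonnegative directions, which you mention as an alternative.

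The repair is exactly what the paper does, and it makes the entire reduction to the independent case unnecessary: the submodularity step requires no independence at all. For a monotone submodular $v$ with $v(\emptyset)=0$ and \emph{any} random subset $T \sse S$, arbitrarily correlated, satisfying $\pr[j \in T] \geq q$ for every $j \in S$, one has $\ex[v(T)] \geq q\, v(S)$; this is \cite[Lemma 2.2]{FeigeMV07} (see also \cite[Proposition 2.3]{F06}), which the paper cites in its one-line conclusion. Combined with your (correct) marginal computation $\pr[j \in T] = 1-(1-1/k)^k \geq 1-1/e$, this immediately yields the lemma, and the coupling discussion can be deleted. (Also, a small slip: the intervals $I_j$ for $j \in S$ tile $[0,1]$ only when $|S|=k$; for $|S|<k$ some mass goes to $*$, but the marginal computation is unaffected.)
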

\begin{proof}
  Fix $n,m,k$ and $\set{v_i}_{i=1}^n$. Let $S \sse [m]$ be a feasible solution to CPP --- i.e. $|S| \leq k$. Let $1_S$ be the vector with $1$ in indices corresponding to $S$, and $0$ otherwise. Let $T \sim r_k(1_S)$. We will first show that each element of $j \in S$ is included in $T$ with probability at least $1-1/e$. Observe that $T$ is the union of $k$ independent draws from a distribution on $[m] \union \set{*}$, where each time the probability of $j \in S$ is $1/k$.  Therefore, the probability that $j$ is included in $T$ is  $1-(1-1/k)^k \geq 1-1/e$.

Submodularity now implies that  $\ex[ v_i(T)] \geq (1-1/e) \cdot  v_i(S)$ for each player $i$ --- this was proved in many contexts: see for example \cite[Lemma 2.2]{FeigeMV07}, and the earlier related result in \cite[Proposition 2.3]{F06}. This completes the proof. 
\end{proof}

\subsection{Warmup: Convexity for Coverage Valuations} %full
%\subsection{Warm-up: Convexity for Coverage \\ Valuations} %proc
\label{sec:coverageconvex}
In this section, we prove a special case of Lemma \ref{lem:klotteryconvex} for coverage valuations. Recall that a coverage function $f$ on ground set
$[m]$ designates some set $\Y$, and $m$ subsets $A_1,\ldots,A_m \sse
\Y$, such that $f(S) = | \union_{j \in S} A_j|$.  

Fix $n,m,k$ and $\set{v_i}_{i=1}^n$. Assume that, for each player $i$, the valuation function $v_i:2^{[m]} \to \RR$  is a coverage function. We let $v(S) = \sum_i v_i(S)$ be the welfare of a solution $S$ to CPP. It is an easy observation  that the sum of coverage functions is also a coverage function. Therefore $v(S)$ is a coverage function. We let $\Y$ be a set, and $A_1,\ldots,A_m \sse \Y$, such that $v(S) = |\union_{j \in \S} A_j | $. While our proof extends easily to the case where $\Y$ is an arbitrary measure space,  we assume in this section that $\Y$ is a finite set for simplicity. 

Let $\P$ denote the polytope of fractional solutions to CPP as given in \eqref{lp:absorbrounding}. We now show that $\ex_{S \sim r_k(x)}[ v(S) ]$ is a concave function of $x$ for $x \in \P$, completing the proof of Lemma \ref{lem:klotteryconvex} for the special case of coverage valuations. Take an arbitrary $x \in \P$, and let $S \sim r_k(x)$ be a random variable. Using linearity of expectations, we can rewrite the expected welfare as follows.
\begin{align*}
    \ex[v(S)] = \ex[|\union_{j \in S} A_j|] = \sum_{\ell \in \Y} \pr[ \ell \in \union_{j \in S} A_j]
\end{align*}
Since the sum of concave functions is concave, showing that $\pr [ \ell \in \union_{j \in S} A_j]$ is concave in $x$ for each $\ell \in \Y$ suffices to complete the proof. For $\ell \in \Y$, let $T_\ell = \set{ j \in [m] : \ell \in A_j}$ be the set of projects that ``cover'' $\ell$. Let $p_1,\ldots,p_k$ and $I_1,\ldots,I_k$ be as in Algorithm \ref{alg:round}. Note that $\set{I_j}_{j=1}^m$ are disjoint sub-intervals of $[0,1]$, and $|I_j| = \frac{x_j}{k}$. We can rewrite the probability of covering $\ell$ as follows.
\begin{align*}
     \pr[ \ell \in \union_{j \in S} A_j] &= \pr[ S \intersect T_\ell \neq \emptyset ] \\
&= \pr[\set{p_1,\ldots,p_k} \intersect    \union_{j \in T_\ell} I_j \neq \emptyset] \\
&= 1- \pr[ \set{p_1,\ldots,p_k} \intersect \union_{j \in T_\ell} I_j = \emptyset] \\
&= 1- \prod_{t=1}^k \pr[ p_t \notin \union_{j \in T_\ell} I_j ]\\
&= 1- \prod_{t=1}^k (1- |\union_{j \in T_\ell} I_j|)\\
&= 1-  \left(1- \frac{\sum_{j \in T_\ell} x_j}{k} \right)^k.
\end{align*}
The final form is simply the composition of the concave function $g(y)= 1-(1-y/k)^k$ with the affine function $y \to \sum_{j \in T_\ell} x_j$. It is well known that composing a concave function with an affine function yields another concave function (see e.g. \cite{Boyd}). This completes the proof.

\subsection{Convexity for Matroid Rank Sum Valuations}
\label{sec:mrsconvex}
In this section, we will prove Lemma \ref{lem:klotteryconvex} in its full generality.  First, we recall the \emph{discrete hessian matrix}, as defined in \cite{DRY11}. 
\begin{definition}[\cite{DRY11}]
  Let $v:2^{[m]} \to \RR$ be a set function. For $S \sse [m]$, we define the discrete Hessian matrix $\H^v_S \in \RR^{m \times m}$ of $v$ at $S$ as follows:
\begin{equation}
\label{eq:discretehess}
  \H^v_S(i,j) = v(S \union \set{i,j}) -v(S \union \set{i}) - v(S \union \set{j})  + v(S) 
\end{equation} 
 for $i, j \in [m]$.
\end{definition}

It was shown in \cite{DRY11} that the discrete hessian matrices are negative semi-definite for matroid rank sum functions.
\begin{claim}[\cite{DRY11}]
\label{claim:discrete_convex}
  If $v:2^{[m]} \to \RR^+$ is a matroid rank sum function, then $\H^v_S$ is negative semi-definite for each $S \sse [m]$.
\end{claim}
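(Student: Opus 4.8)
The plan is to exploit two facts: the discrete Hessian depends additively and positively-homogeneously on $v$, and matroid rank functions have extremely constrained local behavior. Since an MRS function is $v = \sum_{\ell} w_\ell u_\ell$ with $w_\ell \geq 0$ and each $u_\ell$ a matroid rank function, and since $\H^{v}_S = \sum_\ell w_\ell\, \H^{u_\ell}_S$ reading off \eqref{eq:discretehess} directly, while a non-negative combination of negative semi-definite matrices is again negative semi-definite, it suffices to prove the claim when $v = u$ is the rank function of a single matroid $M$ on ground set $[m]$.

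Next I would fix $S$ and pass to a contraction. Define $f(T) = u(S \union T) - u(S)$ for $T \sse [m]$; this is the rank function of $M/S$ on ground set $[m]$, with the elements of $S$ now acting as loops, so $f$ is itself a matroid rank function and $f(\emptyset) = 0$. Substituting into \eqref{eq:discretehess} and using $f(\emptyset)=0$ gives $\H^u_S(i,j) = f(\set{i,j}) - f(\set{i}) - f(\set{j})$ for all $i,j \in [m]$ (in particular $\H^u_S(i,i) = -f(\set{i})$). Now I read off the entries from the matroid axioms: every singleton has $f(\set{i}) \in \set{0,1}$, with $f(\set{i}) = 0$ exactly when $i$ is a loop of $M/S$. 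If $i$ or $j$ is a loop, then $f(\set{i,j}) = f(\set{i}) + f(\set{j})$ and so $\H^u_S(i,j) = 0$. If both are non-loops, then $f(\set{i}) = f(\set{j}) = 1$ and $f(\set{i,j}) \in \set{1,2}$, with $f(\set{i,j}) = 1$ precisely when $i$ and $j$ are parallel (this includes $i=j$); hence $\H^u_S(i,j) = -1$ when $i,j$ lie in a common parallel class of $M/S$ and $\H^u_S(i,j) = 0$ otherwise. Since parallelism is an equivalence relation on the non-loop elements, letting $\C$ denote its set of classes we obtain, for every $z \in \RR^m$,
\[
  z^\top \H^u_S z \;=\; -\sum_{C \in \C} \Big( \sum_{i \in C} z_i \Big)^2 \;\leq\; 0,
\]
which shows $\H^u_S$ is negative semi-definite and completes the proof.

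The only genuine content is this local case analysis, and that is where I expect to have to be careful: one needs that singletons have rank at most one, that loops contribute trivially, and above all that ``parallel or equal'' is an equivalence relation on the non-loops — this is exactly what makes $-\H^u_S$ block-diagonal with all-ones blocks, equivalently a Gram matrix, and hence positive semi-definite. Everything else (linearity of the discrete Hessian, the behavior of contraction, the fact that elements of $S$ become loops) is routine bookkeeping; I would just double-check the convention that passing to $f$ loses no information about $\H^u_S$.
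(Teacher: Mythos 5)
Your proof is correct. Note that the paper itself does not prove Claim \ref{claim:discrete_convex}; it is imported from \cite{DRY11}, and your argument is essentially the same standard one: reduce by non-negative linearity to a single matroid rank function, pass to the contraction by $S$ (so $\H^u_S(i,j)=f(\set{i,j})-f(\set{i})-f(\set{j})$ with $f$ the rank function of $M/S$, the elements of $S$ becoming loops), observe that the entries equal $-1$ exactly when $i$ and $j$ are non-loops lying in a common parallel class and $0$ otherwise, and conclude that $-\H^u_S$ is block-diagonal with all-ones blocks, hence positive semi-definite. The one point you flag --- that ``parallel or equal'' is an equivalence relation on the non-loops of $M/S$ --- is indeed the crux, and it follows from submodularity: if $f(\set{i,j})=f(\set{j,k})=1$ then $f(\set{i,j,k})\leq f(\set{i,j})+f(\set{j,k})-f(\set{j})=1$, so $f(\set{i,k})=1$. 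I see no gaps.
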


 We now return to Lemma \ref{lem:klotteryconvex}. Fix $n$ and $m$. For each cardinality bound $k \in [m]$, let $\P_k$ denote the polytope of fractional solutions to CPP as given in \eqref{lp:absorbrounding}.  For a set of MRS valuations $v_1,\ldots,v_n$, we observe that the social welfare $v(S)=\sum_{i=1}^n v_i(S)$ is --- by the (obvious) fact that the sum of MRS functions is an MRS function --- also an MRS function. Therefore, we will prove Lemma \ref{lem:klotteryconvex} by showing that, for each $k \in [m]$ and MRS function $v:2^{[m]} \to \RR$, the following function of $x \in \P_k$ is concave in $x$.
\begin{equation}
\begin{split}
\label{eq:G}
 G^v_k(x) &= \ex_{S \sim r_k(x)}[ v(S)]\\
 &= \sum_{S \sse [m]} v(S) \pr[r_k(x)=S] 
\end{split}
\end{equation}

 We use techniques from combinatorics to write $\pr[r_k(x)=S]$ in a form that will be easier to work with. For $T \sse [m]$, we use $x_T$ as short-hand for $\sum_{j \in T} x_j$, and $\bar{T}$ as short-hand for $[m] \sm T$.
\begin{claim}\label{claim:prob_rewrite}
For each $k \in [m]$, $x \in \P_k$, and $S \sse [m]$
\begin{equation}
  \label{eq:prob_rewrite}
  \pr[r_k(x) = S] = -1^{|S|} \sum_{R \sse S} -1^{|R|} \left(1- \frac{x_{\bar{R}}}{k}\right)^k
\end{equation}  
\end{claim}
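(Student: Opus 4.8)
The plan is to first compute the \emph{downward} probabilities $\pr[r_k(x)\sse T]$ for every $T\sse[m]$, and then recover $\pr[r_k(x)=S]$ by M\"obius inversion (equivalently, inclusion--exclusion) over the Boolean lattice $2^{[m]}$.

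\emph{Step 1: the event structure and downward probabilities.} Since $x\in\P_k$ we have $\sum_j x_j\le k$, so the intervals $I_1,\dots,I_m$ of Algorithm~\ref{alg:round} are genuinely disjoint sub-intervals of $[0,1]$ with $|I_j|=x_j/k$. A project $j$ lies in the output of $r_k(x)$ exactly when at least one of the independent uniform points $p_1,\dots,p_k$ lands in $I_j$. Hence, for any $T\sse[m]$, the event $\{r_k(x)\sse T\}$ is precisely the event that \emph{none} of $p_1,\dots,p_k$ lands in $\bigcup_{j\in\bar T}I_j$. By disjointness this union has Lebesgue measure $\sum_{j\in\bar T}x_j/k=x_{\bar T}/k\le 1$, so each $p_t$ independently avoids it with probability $1-x_{\bar T}/k$, giving
\[
  \pr[r_k(x)\sse T] \;=\; \left(1-\frac{x_{\bar T}}{k}\right)^{k} \qquad\text{for all } T\sse[m].
\]

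\emph{Step 2: inversion and sign bookkeeping.} For every $T$ we have the identity $\pr[r_k(x)\sse T]=\sum_{S\sse T}\pr[r_k(x)=S]$, so M\"obius inversion on the subset lattice yields $\pr[r_k(x)=S]=\sum_{R\sse S}(-1)^{|S\sm R|}\pr[r_k(x)\sse R]$. Substituting the formula from Step~1 and using $(-1)^{|S\sm R|}=(-1)^{|S|}(-1)^{|R|}$ for $R\sse S$ gives exactly \eqref{eq:prob_rewrite}.

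\emph{Main obstacle.} There is no deep difficulty here; the points that require care are (i) checking that the $I_j$ partition a sub-interval of $[0,1]$ so that measures genuinely add --- this is where $x\in\P_k$ is used --- and (ii) getting the direction of the inversion and the signs right, in particular identifying $\{r_k(x)\sse T\}$ with the ``miss'' event on the \emph{complement} $\bar T$ rather than on $T$. If one prefers to avoid quoting M\"obius inversion, one can instead expand $\mathbf{1}[r_k(x)=S]=\prod_{j\in S}\bigl(1-\mathbf{1}[j\text{ missed}]\bigr)\cdot\prod_{j\notin S}\mathbf{1}[j\text{ missed}]$, take expectations term by term using Step~1, and reindex the resulting sum via $R\mapsto S\sm R$; this reproduces \eqref{eq:prob_rewrite} with only elementary manipulations.
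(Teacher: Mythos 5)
Your proof is correct and follows essentially the same route as the paper: both compute the downward probabilities $\pr[r_k(x)\sse R]=\left(1-\frac{x_{\bar R}}{k}\right)^k$ from the disjoint-interval structure of Algorithm~\ref{alg:round} and then recover $\pr[r_k(x)=S]$ by subset inversion. The only cosmetic difference is that you quote M\"obius inversion on the Boolean lattice directly, while the paper derives the same identity by hand via inclusion--exclusion over the events $\{r_k(x)\sse S\sm\set{j}\}$.
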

\begin{proof}
It is easy to see that  $\pr[ r_k(x) = S ]$ is equal to:
\begin{align}\label{eq:prob_rewriting1}
  \pr[ r_k(x) \sse S] - \pr[ \bigvee_{j \in S} r_k(x) \sse S \sm \set{j} ] 
\end{align}
Using the inclusion-exclusion principle, we can rewrite \eqref{eq:prob_rewriting1} as follows:
\begin{align}
  \label{eq:prob_rewriting2}
  \pr[ r_k(x) \sse S] - \sum_{\emptyset \neq T \sse S} -1^{|T|-1} \pr[r_k(x) \sse S \sm T] 
\end{align}
Letting $R= S \sm T$ in \eqref{eq:prob_rewriting2}, we get
\begin{align}\label{eq:prob_rewriting3}
  \pr [ r_k(x) \sse S] - \sum_{R \subsetneq S} -1^{|S| - |R| - 1} \pr[ r_k(x) \sse R] 
\end{align}
We can easily simplify \eqref{eq:prob_rewriting3} to conclude that
\begin{align}
  \label{eq:prob_rewriting4}
 \pr[r_k(x)=S]=\sum_{R \sse S} -1^{|S| - |R|} \pr[ r_k(x) \sse R]
\end{align}
Next, we observe that the expression $\pr[ r_k(x) \sse R]$ can be expressed as a simple closed form in $x$. Let $p_1,\ldots,p_k$ and $I_1,\ldots,I_m$ be as in Algorithm \ref{alg:round}. The event $r_k(x) \sse R$ occurs exactly when none of $p_1,\ldots,p_k$ land in the intervals corresponding to projects $\bar{R}$. Recalling that the interval $I_j$ of project $j$ has length $x_j /k$, we get that the probability of any particular $p_t$  falling in $\union_{j \in \bar{R} } I_j$ is exactly $x_{\bar{R}} /k$. Therefore, by the independence of the variables $p_1,\ldots,p_k$, we get that
\begin{align}\label{eq:prob_rewriting5}
  \pr[r_k(x) \sse R] = \left(1- \frac{x_{\bar{R}}}{k}\right)^k
\end{align}
Combining \eqref{eq:prob_rewriting4} and \eqref{eq:prob_rewriting5} completes the proof.
\end{proof}

Building on Claim \ref{claim:prob_rewrite}, we now express the Hessian matrix of $G^v_k$ as a non-negative weighted sum of discrete Hessian matrices of $v$. We note that when $x \in \P_k$, it is easy to verify that  $\frac{k-2}{k}\cdot x \in \P_{k-2}$, and therefore \eqref{eq:G_grad} is well-defined.

\begin{claim}
\label{claim:G_grad}
  For each $k \in [m]$, $x \in \P_k$, and $v: 2^{[m]} \to \RR$, we have
\begin{align}\label{eq:G_grad}
\grad^2 G^v_k(x) = \frac{k-1}{k} \sum_{S \sse [m]} \pr\left[ r_{k-2}\left(\frac{k-2}{k}\cdot x \right) = S \right] \H^v_S
\end{align}
\end{claim}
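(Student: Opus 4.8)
The plan is to compute the entries of $\grad^2 G^v_k(x)$ directly from the closed form for $\Pr[r_k(x)=S]$ obtained in Claim~\ref{claim:prob_rewrite}, and then to reorganize the resulting alternating sums into discrete Hessians of $v$ via Möbius inversion. Since $G^v_k$ is a polynomial in $x$, all differentiation below is term-by-term.

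First I would combine \eqref{eq:G}, \eqref{eq:prob_rewriting4} and \eqref{eq:prob_rewriting5} to write $G^v_k(x)=\sum_{S\sse[m]}v(S)\sum_{R\sse S}(-1)^{|S|-|R|}(1-x_{\bar R}/k)^k$. Fixing $i,j\in[m]$ and using that $x_{\bar R}$ is affine in $x$ with $\partial_l x_{\bar R}=\one[l\notin R]$, together with $\tfrac{d^2}{dt^2}(1-t/k)^k=\tfrac{k-1}{k}(1-t/k)^{k-2}$, the chain rule gives $\partial_i\partial_j(1-x_{\bar R}/k)^k=\tfrac{k-1}{k}(1-x_{\bar R}/k)^{k-2}\,\one[i\notin R]\,\one[j\notin R]$. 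The observation that produces $r_{k-2}$ is a change of variables: for $x\in\P_k$ set $z:=\tfrac{k-2}{k}x\in\P_{k-2}$; then $z_{\bar R}/(k-2)=x_{\bar R}/k$, so \eqref{eq:prob_rewriting5} applied with bound $k-2$ yields $(1-x_{\bar R}/k)^{k-2}=\Pr[r_{k-2}(z)\sse R]$. (For $k=1$ both sides of \eqref{eq:G_grad} vanish, since $G^v_1$ is affine, so it suffices to treat $k\ge 2$, where $z$ is well-defined.) Substituting and swapping the order of summation gives $\grad^2 G^v_k(x)(i,j)=\tfrac{k-1}{k}\sum_{R:\,i,j\notin R}\Pr[r_{k-2}(z)\sse R]\sum_{S\supseteq R}(-1)^{|S|-|R|}v(S)$.

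The heart of the argument is a counting identity: for $i,j\notin R$, $\sum_{S\supseteq R}(-1)^{|S|-|R|}v(S)=\sum_{U\sse\overline{R\cup\{i,j\}}}(-1)^{|U|}\H^v_{R\cup U}(i,j)$. This follows by writing $S=R\cup U'$ with $U'\sse\bar R$ and splitting $U'$ into its part $U=U'\sm\{i,j\}$ and its part $U'\cap\{i,j\}$; since $i,j\notin R\cup U$, the inner sum over subsets of $\{i,j\}$ is by definition \eqref{eq:discretehess} the discrete Hessian entry $\H^v_{R\cup U}(i,j)$. Plugging this in, reindexing the double sum by $S=R\sqcup U$ (so $R\sse S$, $i,j\notin S$, and $(-1)^{|U|}=(-1)^{|S|-|R|}$), and then applying \eqref{eq:prob_rewriting4} for the scheme $r_{k-2}$ to collapse $\sum_{R\sse S}(-1)^{|S|-|R|}\Pr[r_{k-2}(z)\sse R]=\Pr[r_{k-2}(z)=S]$, I obtain $\grad^2 G^v_k(x)(i,j)=\tfrac{k-1}{k}\sum_{S:\,i,j\notin S}\Pr[r_{k-2}(z)=S]\,\H^v_S(i,j)$. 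Finally, $\H^v_S(i,j)=0$ whenever $i\in S$ or $j\in S$ (the four terms of \eqref{eq:discretehess} cancel in pairs), so the restriction $i,j\notin S$ can be dropped, which is exactly the $(i,j)$ entry of \eqref{eq:G_grad}.

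I expect the main obstacle to be the middle combinatorial step: keeping the three nested subset summations (over $R$, over $S\supseteq R$, and over the $\{i,j\}$-part) consistent so that the signs reassemble exactly into the discrete Hessians with no leftover terms. The change of variables $x\mapsto\tfrac{k-2}{k}x$ needed to reinterpret the exponent $k-2$ as a probability under $r_{k-2}$ is the other point requiring a moment's care; everything else is routine differentiation of a polynomial and reindexing of finite sums. (Once Claim~\ref{claim:G_grad} is in hand, combining it with Claim~\ref{claim:discrete_convex}, which says each $\H^v_S$ is negative semidefinite, and with the nonnegativity of the coefficients $\tfrac{k-1}{k}\Pr[r_{k-2}(z)=S]$ immediately gives $\grad^2 G^v_k(x)\preceq 0$ on $\P_k$, hence Lemma~\ref{lem:klotteryconvex}.)
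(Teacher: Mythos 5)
Your proposal is correct and follows essentially the same route as the paper's proof: differentiate the closed form for $\pr[r_k(x)=S]$ from Claim~\ref{claim:prob_rewrite}, regroup the four terms involving $\set{i,j}$ into discrete Hessian entries (your splitting of $S$ into $R\cup U$ and its $\set{i,j}$-part is exactly the paper's ``projection onto $[m]\sm\set{i,j}$''), and reinterpret the factor $\left(1-x_{\bar{R}}/k\right)^{k-2}$ via Claim~\ref{claim:prob_rewrite} applied with bound $k-2$ at the point $\frac{k-2}{k}x$. The only differences are cosmetic (you invoke the containment probabilities earlier and swap summation order), and your explicit handling of $k=1$ is a minor extra care the paper omits.
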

\begin{proof}
 Fix $i,j \in [m]$, possibly with $i = j$. 
%We will show that:
%\[
%\frac{\partial^2 G^v_k(x)}{\partial x_i \partial x_j} = \frac{k-1}{k} \sum_{S \sse [m]} \pr\left[ r_{k-2}\left(\frac{k-2}{k}x \right) = S \right] \H^v_S(i,j) 
%\]
We work with $G^v_k$ as defined in Equation \eqref{eq:G}, and plug in expression \eqref{eq:prob_rewrite}. 
\begin{align*}
 G^v_k(x)  &= \sum_{S \sse [m]} v(S) \cdot -1^{|S|} \sum_{R \sse S} -1^{|R|} \left(1- \frac{x_{\bar{R}}}{k}\right)^k
\end{align*}
Differentiating with respect to $x_i$ and $x_j$ gives:
\begin{align*}
  \frac{\partial^2 G^v_k(x)}{\partial x_i \partial x_j} = \frac{k-1}{k} 
\sum_{S \sse [m]}  v(S) \cdot -1^{|S|}  \sum_{R \sse S\sm\set{i,j}}  -1^{|R|} \left(1- \frac{x_{\bar{R}}}{k}\right)^{k-2}
\end{align*}
We group the terms by projecting $S$ onto $[m] \sm \set{i,j}$, and then we simplify the resulting expression.
\begin{align}\label{eq:Gintermediate}
  \frac{\partial^2 G^v_k(x)}{\partial x_i \partial x_j} = &\frac{k-1}{k} \hspace{-0.2cm} 
\sum_{S \sse [m]\sm \set{i,j}} \hspace{-0.6cm} -1^{|S|} \sum_{R \sse S} \hspace{-0.1cm} -1^{|R|} \left(1- \frac{x_{\bar{R}}}{k}\right)^{k-2}  (v(S) - v(S \union \set{i}) - v(S \union \set{j}) +v(S \union \set{i,j})) \notag\\
 = &\frac{k-1}{k} 
\sum_{S \sse [m]} -1^{|S|} \sum_{R \sse S} \hspace{-0.1cm} -1^{|R|} \left(1- \frac{x_{\bar{R}}}{k}\right)^{k-2}  (v(S) - v(S \union \set{i}) - v(S \union \set{j}) +v(S \union \set{i,j})) \notag\\
= &\frac{k-1}{k} 
\sum_{S \sse [m]} -1^{|S|} \sum_{R \sse S} -1^{|R|} \left(1- \frac{x_{\bar{R}}}{k}\right)^{k-2} \H^v_S(i,j) 
\end{align}
The second equality follows from the fact that $v(S) - v(S \union \set{i}) - v(S \union \set{j}) + v(S \union \set{i,j}) = 0$ when $S$ includes either of $i$ and $j$.  The last equality follows by definition of $\H^v_S$.

 Invoking Claim \ref{claim:prob_rewrite} with $k'=k-2$ and $x'=\frac{k-2}{k} \cdot x$, and plugging the resulting expression into into \eqref{eq:Gintermediate}, we conclude that
\begin{align*}
  \frac{\partial^2 G^v_k(x)}{\partial x_i \partial x_j} = &\frac{k-1}{k} 
\sum_{S \sse [m]} \pr\left[r_{k-2}\left(\frac{k-2}{k} \cdot x\right) = S\right] \H^v_S(i,j).
\end{align*}
\end{proof}

Claims \eqref{claim:discrete_convex} and \eqref{claim:G_grad} establish that, when $v$ is MRS and $k \in [m]$, $\grad^2 G^v_k(x)$ is a non-negative weighted sum of negative semi-definite matrices for each $x \in \P_k$. A non-negative weighted sum of negative semi-definite matrices is negative semi-definite. Therefore,  the Hessian matrix of $G^v_k$ is negative semi-definite at each $x \in \P_k$, and we conclude that $G^v_k$ is a concave function on $\P_k$. This completes the proof of Lemma \ref{lem:klotteryconvex}.

\section*{Acknowledgments} The author thanks Tim Roughgarden and Qiqi Yan for helpful discussions.

{%\small
\bibliography{cppconvex}
\bibliographystyle{plain}
}

\newpage
\appendix

\section{Solving The Convex Program}
\label{sec:solveconvex}

In this section, we overcome some technical difficulties related to the solvability of convex programs. We follow the general outline of \cite[Appendix B]{DRY11}, modifying the proofs throughout in order to handle  the additional technical difficulties specific to CPP. We show in Section \ref{sec:solveconvexapprox} that, in the bounded-lottery-value oracle model, the four conditions for ``solvability'' of convex programs, as stated in Fact \ref{fact:convex_solvability}, are easily satisfied for convex program \eqref{lp:absorbrounding} when $r=r_k$. However, an additional challenge remains: ``solving'' a convex program --- as in Definition \ref{def:rsolvable} --- returns an approximately optimal solution. Indeed the optimal solution of a convex program may be irrational in general, so this is unavoidable.

% However, a more involved challenge remains: Algorithm \ref{alg:midr}, as stated, requires an optimal solution $x^*$ to convex program \eqref{lp:CArelaxation}. Unfortunately, most convex programs cannot be solved exactly -- indeed the entries of $x^*$ may be irrational! Using the ellipsoid method to ``solve'' a convex program invariably returns an approximately
% optimal solution, whereas a maximal-in-distributional-range mechanism
% is by definition \emph{exactly} optimal.

We show how to overcome this difficulty if we settle for polynomial runtime in expectation. While the optimal solution $x^*$ of \eqref{lp:absorbrounding} cannot be computed explicitly, the random variable $r_k(x^*)$ can be sampled in expected polynomial-time. The key idea is the following: \emph{sampling the random variable $r_k(x^*)$ rarely requires precise
 knowledge of $x^*$}.  Depending on the coin
 flips of $r_k$,  we decide how accurately we need to solve
 convex program \eqref{lp:absorbrounding} in order compute $r_k(x^*)$.  Roughly speaking, we show that the probability of requiring a  $(1-\epsilon)$-approximation falls
exponentially in $\frac{1}{\epsilon}$. As a result, we can sample $r_k(x^*)$ in expected polynomial-time. We implement this plan in Section \ref{sec:wellconditioned} under the simplifying assumption that convex program \eqref{lp:absorbrounding} is \emph{well-conditioned} --- i.e. is ``sufficiently concave'' everywhere. In Section \ref{sec:noise}, we show how to remove that assumption by slightly modifying our algorithm. 

%To simplify exposition, we address these issues in  Section \ref{sec:wellconditioned} under a simplifying assumption that can be removed. In particular, we assume in that section that the objective $f(x)$ of convex program \eqref{lp:CArelaxation} is \emph{well-conditioned} -- i.e. ``sufficiently concave'' everywhere. Under this assumption, we prove that Algorithm \ref{alg:midr} can be simulated in expected polynomial-time when $r = \rp$ (Lemma \ref{lem:solveCAconditioned}). In Section \ref{sec:noise}, we show how to remove that assumption by slightly modifying our algorithm. 
%at the cost of $o(1)$ in the approximation ratio. 

\subsection{Approximating the Convex Program}
\label{sec:solveconvexapprox}
\begin{claim}
\label{claim:solveconvexapprox}
  There is an algorithm for Combinatorial Public Projects with MRS valuations in the bounded-lottery-value oracle model that takes as input an instance of the problem and an approximation parameter $\epsilon >0$, runs in $\poly(n,m,\log(1/\epsilon))$ time, and returns a $(1-\epsilon)$-approximate solution to convex program \eqref{lp:absorbrounding} when $r=r_k$.
\end{claim}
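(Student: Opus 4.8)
The plan is to show that, with $r = r_k$, convex program \eqref{lp:absorbrounding} meets the four ``solvability'' hypotheses of Fact \ref{fact:convex_solvability} --- an efficiently-separable, polynomially-bounded feasible polytope, a concave objective, an efficient value oracle for that objective, and a polynomially-bounded Lipschitz constant --- so that Fact \ref{fact:convex_solvability} (ellipsoid-type optimization) returns an approximately optimal point; a short rescaling argument then converts its additive guarantee into the multiplicative one demanded by the claim.

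First I would dispatch the two easy conditions. The feasible region $\P = \set{x \in \RR^m : \sum_j x_j \le k,\ 0 \le x_j \le 1}$ is defined by $2m+1$ explicit halfspaces, so membership and separation are trivial; it lies in the ball of radius $\sqrt m$ about the origin and, assuming $k \ge 1$ (the instance being trivial otherwise), it contains the ball of radius $\tfrac{1}{2m}$ about the point all of whose coordinates equal $\tfrac{1}{2m}$, both radii having polynomially many bits. Concavity of the objective $x \mapsto \ex_{S \sim r_k(x)}[\sum_i v_i(S)] = \sum_{i=1}^n G^{v_i}_k(x)$ on $\P$ is exactly Lemma \ref{lem:klotteryconvex}, since a sum of MRS valuations is again MRS.

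Next, for the value oracle, the key observation is that $G^{v_i}_k(x) = \ex_{S \sim r_k(x)}[v_i(S)]$ is, by definition of $r_k$, the value of the $k$-bounded lottery with marginals $x/k$ and empty promise; since $\sum_j x_j/k \le 1$ and $x_j/k \le 1$ whenever $x \in \P$, it is returned by one query to the bounded-lottery-value oracle of $v_i$ (Definition \ref{def:bounded_lottery_oracle}) on input $(x/k, k, \emptyset)$, so $n$ queries evaluate the objective anywhere on $\P$, and finite differences then supply subgradients. For the Lipschitz bound I would note that each $G^{v_i}_k$ is a degree-$k$ polynomial in $x$ whose partial derivatives on $\P$ are crudely at most $\poly(m)\cdot v_i([m])$, and that by subadditivity $\sum_i v_i([m]) \le \sum_j \sum_i v_i(\set{j}) = m L$, where $L := \max_j \sum_i v_i(\set{j})$; hence the objective is Lipschitz on $\P$ with constant $\poly(m)\cdot L$.

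Finally, to upgrade additive to multiplicative accuracy: rescaling every $v_i$ by a common positive constant rescales the objective of \eqref{lp:absorbrounding} but changes neither its maximizer nor the meaning of ``$(1-\epsilon)$-approximate'', so after computing $L$ with $m$ oracle queries (querying with bound $1$ and marginal vector $e_j$ returns $v_i(\set{j})$) and dividing through, we may assume $L = 1$. Submodularity then bounds the optimum of \eqref{lp:absorbrounding} above by $\sum_{j \in S^\star}\sum_i v_i(\set{j}) \le kL = k$ for the optimal set $S^\star$, while Lemma \ref{lem:klotteryapprox} together with monotonicity bounds it below by $(1-1/e)L = 1-1/e$; applying Fact \ref{fact:convex_solvability} with the oracles above and target additive error $\delta := (1-1/e)\epsilon$ produces, in $\poly(n,m,\log(1/\delta)) = \poly(n,m,\log(1/\epsilon))$ time, a point $\hat x \in \P$ whose objective value falls short of the optimum by at most $\delta \le \epsilon\cdot\mathrm{OPT}$, i.e.\ a $(1-\epsilon)$-approximate solution. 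I expect this last step --- securing a genuinely \emph{multiplicative} $(1-\epsilon)$ guarantee with only a $\poly(\log(1/\epsilon))$ dependence in the running time --- to be the only delicate point, since it is what forces both the normalization $L = 1$ and the two-sided estimate $1-1/e \le \mathrm{OPT} \le k$; by contrast, once the objective is recognized as literally a bounded-lottery value, the evaluation oracle is immediate and, via Lemma \ref{lem:klotteryconvex}, so is concavity.
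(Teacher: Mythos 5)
Your handling of the feasible region, the concavity of the objective (via Lemma \ref{lem:klotteryconvex}), and the value oracle ($G^{v_i}_k(x)$ is one bounded-lottery-value query with marginals $x/k$, bound $k$, promise $\emptyset$) matches the paper. The gap is in the one place where the paper's proof has actual content: the first-order oracle. Item~4 of Fact \ref{fact:convex_solvability} demands an algorithm that outputs $\grad c(x)$ (together with $c(x)$) exactly, and your proposal dismisses this with ``finite differences then supply subgradients.'' Naive finite differences give only approximate gradients; nothing in Fact \ref{fact:convex_solvability} as stated permits running the ellipsoid method with inexact first-order information while retaining the $\poly(n,m,\log(1/\epsilon))$ guarantee, and quantifying how gradient error propagates through the method would itself require an argument you do not give. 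The paper's proof of this claim is devoted precisely to this point: differentiating the closed form of Claim \ref{claim:prob_rewrite} yields Equation \eqref{eq:14}, which exhibits $\partial G^{v_i}_k/\partial x_j$ exactly as the expected value of $v_i$ under the $(k-1)$-bounded lottery with marginals $\frac{k-1}{k}x$ and promise $\set{j}$, minus the expected value under the same lottery with no promise --- i.e.\ two exact bounded-lottery-value queries per coordinate. This is the reason Definition \ref{def:bounded_lottery_oracle} carries the promise set $R$ at all, a feature your argument never uses. Your route could be repaired without the promise by noting that along each coordinate direction the objective is a univariate polynomial of degree at most $k$, so $k+1$ exact evaluations plus interpolation recover the exact directional derivative; but that argument has to be made --- ``finite differences'' as written does not meet the oracle requirement.

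Two smaller points: Definition \ref{def:rsolvable} already states a multiplicative $(1-\epsilon)$ guarantee with $\poly(\log(1/\epsilon))$ dependence, so your rescaling and two-sided bound $(1-1/e)L \le \mathrm{OPT} \le kL$, while correct, are unnecessary; and Fact \ref{fact:convex_solvability} has no Lipschitz-constant hypothesis, so that part of your checklist is misremembered (harmlessly).
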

It suffices to show that the four conditions of Fact \ref{fact:convex_solvability} are satisfied in our setting. The first three are immediate from elementary combinatorial optimization (see for example \cite{schrijver}).
%: the feasible set $\R$ of convex program \eqref{lp:CArelaxation} is a matroid polytope corresponding to a matroid on a ground set of size  $mn$, and therefore: (1) Has dimension polynomial in the parameters $m$ and $n$ of the problem. (2) We can easily compute a starting ellipsoid and lower-bound on the volume of $\R$ as required. (3) There is a separation oracle for $\R$. For (2) and (3), see for example \cite{schrijver}.  
It remains to show that the first-order oracle, as defined in Fact \ref{fact:convex_solvability}, can be implemented in polynomial-time in the bounded-lottery-value oracle model. We let $f(x)$ denote the objective function of convex program \eqref{lp:absorbrounding} when $r=r_k$. This objective can, by definition, be written as follows.
\[f(x)= \ex_{S \sim r_k(x)} \left[\sum_i v_i(S)\right] =  \sum_i G^{v_i}_k(x)\]
where  $v_i$ is the valuation function of player $i$  and $G^{v_i}_k$ is as defined in \eqref{eq:G}. By definition,  $G^{v_i}_k(x)$ is the outcome of querying the bounded-lottery-value oracle of $v_i$ with bound $k$ and marginals $x/k$. Therefore, we can evaluate $f(x)$ using  $n$ bounded-lottery-value queries, one for each player. It remains to show that we can also evaluate the (multi-variate) derivative $\grad f(x)$ of  $f(x)$. Using definition \eqref{eq:G} and Claim \ref{claim:prob_rewrite}, we take the partial derivative of $G^{v_i}_k$ with respect to $x_{j}$ and simplify the resulting expression.

\begin{align}
  \label{eq:14}
  \frac{\partial G^{v_i}_k}{\partial x_{j}}(x) &=  \sum_{S \sse [m]}   -1^{|S|} v_i(S) \sum_{R \sse S \sm \set{j}}  -1^{|R|+1} \left(1- \frac{x_{\bar{R}}}{k}\right)^{k-1} \notag\\
%&=  \hspace{-0.4cm}\sum_{S \sse [m]\sm \set{j}}\hspace{-0.3cm}   (-1^{|S|} v_i(S) + -1^{|S|+1} v_i(S \union \set{j}))  \\
%&\ \ \ \ \times \sum_{R \sse S}  -1^{|R|+1} \left(1- \frac{x_{\bar{R}}}{k}\right)^{k-1}  \\
&= \hspace{-0.3cm}  \sum_{S \sse [m]\sm \set{j}} \hspace{-0.3cm}  -1^{|S|}\left( v_i(S \union \set{j}) - v_i(S)\right)   \sum_{R \sse S}  -1^{|R|} \left(1- \frac{x_{\bar{R}}}{k}\right)^{k-1}\notag  \\
&=  \sum_{S \sse [m]}  -1^{|S|}\left(v_i(S \union \set{j}) - v_i(S)\right)  \sum_{R \sse S} -1^{|R|} \left(1- \frac{x_{\bar{R}}}{k}\right)^{k-1}  \notag \\
%&= \hspace{-0.2cm}\sum_{S \sse [m]}\hspace{-0.25cm} -1^{|S|} v_i(S \union \set{j}) \hspace{-0.1cm}  \sum_{R \sse S}\hspace{-0.2cm} -1^{|R|} \left(1- \frac{\frac{k-1}{k} x_{\bar{R}}}{k-1}\right)^{k-1}  \notag\\ 
%& \ \ \ \  - \sum_{S \sse [m]} \hspace{-0.2cm} -1^{|S|} v_i(S)  \sum_{R \sse S}\hspace{-0.2cm} -1^{|R|} \left(1- \frac{\frac{k-1}{k} x_{\bar{R}}}{k-1}\right)^{k-1} \notag  \\
&= \sum_{S \sse [m]} v_i(S \union \set{j})  \pr\left[ r_{k-1}\left(\frac{k-1}{k}x\right)  = S\right]   - \sum_{S \sse [m]} v_i(S)  \pr\left[ r_{k-1}\left(\frac{k-1}{k}x\right)  = S\right]. 
\end{align}
The second equality follows by grouping the terms of the summation  by the projection of $S$ onto $[m]\sm \set{j}$. The third equality follows from the observation that $v(S \union \set{j}) - v(S)=0$ when $S$ includes $j$. The fourth equality follows by a simple re-arrangement and application of Claim \ref{claim:prob_rewrite}.

Inspect the final form \eqref{eq:14} in light of the definition of  bounded-lottery-value oracles (Definition \ref{def:bounded_lottery_oracle}) and the definition of  $r_k$ (Section \ref{sec:klotteryrounding}). Notice that the first term is the expected value of $v_i$ over the $(k-1)$-bounded-lottery with marginals $\frac{k-1}{k}x$ and promise $\set{j}$. The second term is the expected value of $v_i$ over the same lottery without the promise. Therefore, we can evaluate $\frac{\partial G^{v_i}_k}{\partial x_{j}}(x)$ using two queries to the bounded-lottery-value oracle of player $i$.   This completes the proof of Claim \ref{claim:solveconvexapprox}.

\subsection{The Well-Conditioned Case}
\label{sec:wellconditioned}

In this section, we make the following simplifying assumption:  The objective function $f(x)$ of convex program \eqref{lp:absorbrounding} with $r=r_k$, when restricted to any line in the feasible set $\P$, has a  second derivative of magnitude at least $\lambda=\frac{\sum_{i=1}^n v_i([m])}{ 2^{\poly(n,m)}}$ everywhere, where the polynomial in the denominator may be arbitrary. This is equivalent to requiring that every eigenvalue of the Hessian matrix of $f(x)$  has magnitude at least $\lambda$ when evaluated at any point in $\P$. Under this assumption, we prove Lemma \ref{lem:solveconvexconditioned}.
\begin{lemma}\label{lem:solveconvexconditioned}
  Assume the magnitude of the second derivative of $f(x)$ is at least $ \lambda=\frac{\sum_{i=1}^n v_i([m])}{ 2^{\poly(n,m)}}$ everywhere. Algorithm \ref{alg:midr}, instantiated with $r=r_k$, can be simulated in time polynomial in $n$ and $m$ in expectation.
\end{lemma}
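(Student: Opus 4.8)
The plan is to combine the approximate convex-program solver of Claim~\ref{claim:solveconvexapprox} with a ``lazy sampling'' argument for the rounding scheme $r_k$. Recall that Algorithm~\ref{alg:midr} with $r = r_k$ first computes the optimizer $x^*$ of \eqref{lp:absorbrounding} and then samples $S \sim r_k(x^*)$ by drawing $k$ independent uniform points $p_1,\dots,p_k \in [0,1]$ and checking which intervals $I_j(x^*)$ they hit. The key observation is that, for a fixed realization of $p_1,\dots,p_k$, the output set $S$ depends on $x^*$ only through the \emph{relative order} of the partial sums $\frac{1}{k}\sum_{j' \le j} x^*_{j'}$ against the $p_t$'s; if every $p_t$ is bounded away from every breakpoint $\frac{1}{k}\sum_{j' \le j} x^*_{j'}$ by more than some margin $\delta$, then computing $x^*$ only to additive accuracy $k\delta$ suffices to determine $S$ exactly. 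So the plan is: (i) draw $p_1,\dots,p_k$; (ii) let $\delta$ be the smallest gap between any $p_t$ and the nearest breakpoint of $x^*$ — more precisely, run the solver at successively finer accuracies $\epsilon = 2^{-1}, 2^{-2}, \dots$ until the computed approximate optimizer $\hat x$ pins down the set $S$ unambiguously (i.e., no $p_t$ lies within the current error bar of a breakpoint); (iii) output the resulting $S$. Each call to the solver costs $\poly(n,m,\log(1/\epsilon))$ by Claim~\ref{claim:solveconvexapprox}, so if we stop at accuracy level $t$ the cost is $\poly(n,m,t)$, and it remains to show the stopping level $t$ has a geometric (hence $O(1)$-expectation, or at worst polynomial-expectation) tail.

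The core quantitative step is to bound $\Pr[\text{stopping level} > t]$. Here the well-conditioning assumption enters: since $f$ has second derivative of magnitude at least $\lambda = \frac{\sum_i v_i([m])}{2^{\poly(n,m)}}$ along every line in $\P$, an $\epsilon$-approximate solution $\hat x$ (in objective value) is within Euclidean distance $O(\sqrt{\epsilon \cdot \mathrm{OPT}/\lambda})$ of the true optimizer $x^*$ — strong convexity converts value-closeness into point-closeness. Scaling $\epsilon$ down geometrically thus shrinks $\|\hat x - x^*\|$ geometrically, so the error bar around each breakpoint shrinks geometrically in $t$. A uniform random $p_t$ lands within an interval of width $w$ around a fixed breakpoint with probability at most $w$; union-bounding over the $k$ points and $m$ breakpoints, the probability that the level-$t$ error bar still straddles some $(p_t, \text{breakpoint})$ pair is at most $mk \cdot w_t$ where $w_t$ decays like $c^{-t}$ for a constant $c>1$ depending on the conditioning. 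Hence $\Pr[\text{level} > t] \le mk\, c^{-t}$, and the expected running time is $\sum_t \poly(n,m,t)\cdot mk\,c^{-t} = \poly(n,m)$.

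One subtlety to handle carefully: the $p_t$'s are drawn \emph{before} we know $x^*$, and we must argue that conditioning on the coin flips seen so far does not bias the tail estimate — this is fine because the randomness of $r_k$ (the $p_t$'s) is independent of the deterministic computation of $x^*$, so we may fix $x^*$ first in the analysis and take the probability over the $p_t$'s. A second subtlety is the ``measure-zero'' event that some $p_t$ equals a breakpoint of $x^*$ exactly (or that $x^*$ is irrational so no finite-precision $\hat x$ ever resolves it); since $x^*$ is fixed and the $p_t$ are continuous, this happens with probability $0$ and can be ignored (or handled by an arbitrary tie-breaking rule, which is consistent with how $r_k$ is defined on the boundary). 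I expect the main obstacle to be making the strong-convexity-to-distance bound fully rigorous and tracking the constant $c$ in the geometric decay through the scaling by $1/k$ in the definition of the intervals $I_j$ — everything else is a routine union bound and geometric-series summation. This is essentially the argument of \cite[Appendix B]{DRY11}, adapted to the interval structure of $r_k$ in place of the independent-rounding structure used there.
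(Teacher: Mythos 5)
Your proposal is correct and is essentially the paper's own argument: lazily sample the $p_t$'s, repeatedly call the approximate solver of Claim~\ref{claim:solveconvexapprox} at geometrically finer accuracy until no $p_t$ lies in an uncertainty zone around a breakpoint, and use the strong-concavity bound $f(x^*)-f(x)\ge\frac{\lambda}{2}\|x-x^*\|^2$ to convert objective accuracy into solution-space accuracy, with a union bound and geometric series giving expected polynomial time. The only cosmetic difference is that the paper parameterizes the loop by the solution-space accuracy $\delta$ (halved each round, with $\epsilon(\delta)=\delta^2\lambda/(2\sum_i v_i([m]))$), so the tail is exactly $2^{-t}$ from the start, whereas your schedule in $\epsilon$ hides an exponentially large prefactor of order $\sqrt{\mathrm{OPT}/\lambda}=2^{\poly(n,m)}$ in the error-bar width $w_t$ — harmless, since it only delays the geometric decay by $\poly(n,m)$ rounds each costing $\poly(n,m,t)$ time, but worth tracking explicitly as you anticipated.
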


%We observe that, in order to simulate Algorithm \eqref{alg:midr} when using the poisson rounding procedure of Algorithm \ref{alg:round}, we need not compute the optimal solution $x^*$ of convex program \eqref{lp:CA} explicitly. At the end of the day, it suffices to sample from $\rp(x^*)$. We notice that this can be done in expected polynoimal time. The reason is thus: For most draws of the internal random coins $\set{p_j}_j$ of rounding scheme \eqref{alg:round}, computing the allocation requires only coarse knowledge of $x^*$. Therefore, using the ellipsoid algorithm to compute an approximation $\tilde{x}$ to $x^*$ suffices in these cases. On the rare occasions where coarse estimates of $x^*$ do not suffice, we can spend extra time computing better estimates until we have the desired precision. We now make this more precise.

Let $x^*$ be the optimal solution to convex program \eqref{lp:absorbrounding} with $r=r_k$. Algorithm \ref{alg:midr} outputs a set of projects distributed as $r_k(x^*)$. The $k$-bounded-lottery rounding scheme, as described in Algorithm \,\ref{alg:round}, requires making $k$ independent decisions: for $\ell \in \set{1,\ldots,k}$, we draw $p_\ell$ uniformly from $[0,1]$ and decide which interval $I_j$, if any, $p_\ell$ falls into. In other words, we find the minimum index $j_\ell$ (if any) such that $\sum_{j \leq j_\ell} x^*_{j}/k  \geq p_\ell$. Fix $\ell$. For most realizations of $p_\ell$, we can calculate $j_\ell$ using only coarse estimates $\tilde{x}_{j}$ to $x^*_{j}$. Assume we have an \emph{estimation oracle} for $x^*$ that, on input $\delta$, returns a \emph{$\delta$-estimate} $\tilde{x}$ of $x^*$: Specifically, $\tilde{x}_{j} - x^*_{j} \leq \delta$ for each $j \in [m]$.  If $p_\ell$ falls outside the ``uncertainty zones'' of $\tilde{x}$, such as when $|p_\ell-\sum_{j' \leq j} \tilde{x}_{j'} / k| > \delta m/k$ for each $j\in[m]$, it is easy to see that we can correctly determine $j_\ell$ by using $\tilde{x}$ in lieu of $x$. The total measure of the uncertainty zones of $\tilde{x}$ is at most  $2m^2\delta$,  therefore $p_\ell$ lands outside the uncertainty zones with probability at least $1-2m^2\delta$. The following claim shows that if the estimation oracle for $x^*$ can be implemented in time polynomial in $\log(1/\delta)$, then we can simulate the $k$-bounded-lottery  rounding procedure in expected polynomial-time.

\begin{claim}
  Let $x^*$ be the optimal solution of convex program \eqref{lp:absorbrounding} with $r=r_k$. Assume access to a subroutine $B(\delta)$ that returns a $\delta$-estimate of $x^*$ in  $\poly(n,m,\log(1/\delta))$ time. Algorithm \ref{alg:midr}, instantiated with $r=r_k$, can be simulated in expected $\poly(n,m)$ time.
\end{claim}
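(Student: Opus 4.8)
The plan is to couple the $k$ independent draws $p_1,\ldots,p_k$ of Algorithm~\ref{alg:round} with an adaptive sequence of calls to the estimation subroutine $B(\delta)$, sharpening $\delta$ geometrically until the draw can be resolved unambiguously. Concretely, for a fixed draw $p_\ell$, I would run $B$ at accuracies $\delta = \delta_0, \delta_0/2, \delta_0/4, \ldots$, i.e. $\delta = 2^{-t}\delta_0$ for $t = 0, 1, 2, \ldots$, and at stage $t$ check whether $p_\ell$ lies outside the uncertainty zones of the returned estimate $\tilde x$ (that is, whether $|p_\ell - \sum_{j' \le j}\tilde x_{j'}/k| > \delta m / k$ for every $j \in [m]$). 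By the discussion preceding the claim, once this test passes the index $j_\ell$ is determined correctly from $\tilde x$ alone, and by the argument given there the test \emph{fails} at stage $t$ with probability at most $2m^2 \delta = 2m^2 2^{-t}\delta_0$ over the choice of $p_\ell$. So the number of stages $T_\ell$ needed for draw $\ell$ satisfies $\Pr[T_\ell \ge t] \le 2m^2 2^{-t}\delta_0$, whence $\Ex[T_\ell] = \sum_{t \ge 0}\Pr[T_\ell \ge t] = O(1 + \log(m^2\delta_0)) = O(\log m)$ once $\delta_0$ is, say, a constant or an inverse polynomial.

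The cost of stage $t$ is one call $B(2^{-t}\delta_0)$, which by hypothesis runs in $\poly(n,m,\log(2^t/\delta_0)) = \poly(n,m,t)$ time; summing over $t \le T_\ell$, the expected work for draw $\ell$ is $\Ex\big[\sum_{t \le T_\ell} \poly(n,m,t)\big]$, and since $\Pr[T_\ell \ge t]$ decays geometrically this expectation is dominated by a geometric-weighted sum of polynomials in $t$, hence $\poly(n,m)$. (If one wants to be careful about a bad tail, one can note $B$'s running time is at most $\poly(n,m) \cdot t^c$ for some constant $c$, and $\sum_t 2^{-t} t^c = O(1)$.) Finally, summing over the $k \le m$ independent draws multiplies everything by at most $m$, so the total expected running time to produce $S \sim r_k(x^*)$ is $\poly(n,m)$. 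Since Algorithm~\ref{alg:midr} instantiated with $r=r_k$ consists precisely of (i) obtaining access to $x^*$ through such a subroutine and (ii) sampling $r_k(x^*)$, this establishes the claim.

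The main thing to be careful about — rather than a deep obstacle — is the independence and correctness of the coupling: the draws $p_1,\ldots,p_k$ must be fixed \emph{before} any call to $B$ (they are the mechanism's own coins and do not depend on $x^*$), and the adaptive refinement for draw $\ell$ uses only fresh calls to $B$ at finer accuracies, so the event ``$T_\ell \ge t$'' depends only on $p_\ell$ and is independent across $\ell$; this is exactly what lets us bound $\Pr[T_\ell \ge t]$ by the uncertainty-zone measure computed earlier. One should also observe that the estimate produced at the successful stage is guaranteed to pin down $j_\ell$ (including the case where $p_\ell$ exceeds $\sum_j \tilde x_j / k$, corresponding to the symbol $*$), so the output is \emph{exactly} distributed as $r_k(x^*)$ — no approximation error leaks into the mechanism, preserving exact truthfulness. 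Everything else is the routine geometric-series bookkeeping sketched above.
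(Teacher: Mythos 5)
Your proposal is correct and follows essentially the same route as the paper's proof: fix each draw $p_\ell$ up front, call $B(\delta)$ with $\delta$ halved repeatedly until $p_\ell$ clears the uncertainty zones (so $j_\ell$, including the $*$ case, is pinned down exactly), bound the failure probability at stage $t$ by the uncertainty-zone measure $2m^2\delta$, and combine the geometric decay with the $\poly(n,m,t)$ cost per stage via a geometric summation. The only cosmetic difference is that the paper fixes $\delta_0 = 1/(2m^2)$ so the stage-$t$ failure probability is exactly $2^{-t}$, whereas you keep $\delta_0$ generic; the bookkeeping is otherwise identical.
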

\begin{proof}
Fix $\ell \in \set{1,\ldots,k}$. Draw $p_\ell \in [0,1]$ uniformly at random as in the $k$-bounded-lottery rounding scheme in Algorithm  \ref{alg:round}.  We will show how to find, in expected $\poly(n,m)$ time, the minimum index $j_\ell$ (if any) such that $\sum_{j \leq j_\ell} x^*_{j}/k  \geq p_\ell$. 

The algorithm proceeds as follows: Start with $\delta=\delta_0=\frac{1}{2m^2}$. Let $\tilde{x}=B(\delta)$. While  $|p_\ell -\sum_{j' \leq j} \tilde{x}_{j'}/k| \leq \delta m/k$ for some $j\in [m]$ (i.e. $p_\ell$ may fall inside an ``uncertainty zone'') do the following: let $\delta=\delta/2$, $\tilde{x}=B(\delta)$ and repeat. After the loop terminates, we have a sufficiently accurate estimate of $x^*$ to calculate $j_\ell$.

It is easy to see that the above procedure is a faithful simulation of Algorithm \eqref{alg:round} on $x^*$. It remains to bound its expected running time. Let $\delta_t=\frac{1}{2^{t+1}m^2}$ denote the value of $\delta$ at iteration $t$.  By our initial assumption, iteration $t$ takes $\poly(n,m,\log(1/\delta_t)) = \poly(n,m,\log(2^{t+1} m^2))= \poly(n,m,t)$ time. The probability this procedure does not terminate after $t$ iterations is at most $2 m^2 \delta_t = 1/2^t$. Taken together, these two facts and a simple geometric summation imply that the expected runtime is polynomial in $n$ and $m$.
\end{proof}

It remains to show that the estimation oracle $B(\delta)$ can be implemented in $\poly(n,m,\log(1/\delta))$ time. At first blush, one may expect that the ellipsoid method can be used in the usual manner here. However, there is one complication: we require an estimate $\tilde{x}$ that is close to $x^*$ \emph{in solution space} rather than in terms of objective value. Using our assumption on the curvature of $f(x)$, we will reduce finding a $\delta$-estimate of $x^*$ to finding an $1-\epsilon(\delta)$ approximate solution to convex program \eqref{lp:absorbrounding} with $r=r_k$. The dependence of $\epsilon$ on $\delta$ will be such that $\epsilon \geq \poly(\delta) / 2^{\poly(n,m)}$, thereby we can invoke Claim \ref{claim:solveconvexapprox} to deduce that $B(\delta)$ can be implemented in $\poly(n,m,\log(1/\delta))$ time.

Let $\epsilon=\epsilon(\delta) = \frac{\delta^2 \lambda}{2 \sum_i v_i([m]) }$. Plugging in the definition of $\lambda$, we deduce that $\epsilon \geq \delta^2 / 2^{\poly(n,m)}$, which is the desired dependence.  It remains to show that if $\tilde{x}$ is $(1-\epsilon)$-approximate solution to \eqref{lp:absorbrounding}, then $\tilde{x}$ is also a $\delta$-estimate of $x^*$.

 Using the fact that $f(x)$ is concave, and moreover its second derivative has magnitude at least $\lambda$, it a simple exercise to bound distance of any point $x$ from the optimal point $x^*$ in terms of its sub-optimality $f(x^*) - f(x)$, as follows: 
%Drawing a ray from $x^*$ to $x$, the value of the objective $f$ is strictly decreasing as we move from $x^*$ to $x$, and its curvature is at least $\lambda$. It is now a simple exercise to deduce that 
\begin{equation}
  \label{eq:distance_suboptimality}
f(x^*) - f(x)  \geq \frac{\lambda}{2} ||x - x^*||^2. 
\end{equation}
Assume $\tilde{x}$ is a $(1-\epsilon)$-approximate solution to \eqref{lp:absorbrounding} with $r=r_k$. Equation \eqref{eq:distance_suboptimality} implies that
\begin{align*}
  ||\tilde{x} - x^*||^2 &\leq \frac{2}{\lambda} \epsilon f(x^*) 
  = \frac{\delta^2}{\sum_i v_i([m])} f(x^*) \leq \delta^2,
\end{align*}
where the last inequality follows from the fact that $\sum_i v_i([m])$ is an upper-bound on the optimal value $f(x^*)$. Therefore, $||x - x^*|| \leq \delta$, as needed.
 % impli Using the fact that $\sum_{i=1}^n v_i[m]$ is an upper bound on $f(x^*)$, and combining the definitions of $\epsilon(\delta)$ and $\lambda$, we deduce  that $\epsilon(\delta) \geq \delta^2/ 2^{\poly(n,m)}$, as needed. If a solution $x$ is such that $f(x) \geq (1-\epsilon) f(x^*)$, then we can combine this with Equation \eqref{eq:distance_suboptimality} and the definition of  $\epsilon$ to conclude that $||x - x^*|| \leq \delta$. In other words, a $(1-\epsilon)$-approximate solution $\tilde{x}$ is also a $\delta$-estimate of $x^*$. 
%By claim \ref{claim:solveCAapprox}, we can compute such a $\tilde{x}$ in time $\poly(n,m,\log(1/\epsilon))$. To complete the proof, it suffices to show that $\log(1/\epsilon) = \poly(n,m,\log(1/\delta))$.  
This completes the proof of Lemma \ref{lem:solveconvexconditioned}.

%We assume the rounding scheme is well-conditioned -- i.e. \lambda-concave for lamba > blah --, and show that flipping a coin with probability x^*_j can be done in polynomial time. While, in degenerate cases, our poissson round scheme is not well-conditioned, we argue in next subsection that it can be slightly modified to be well-conditioned.

\subsection{Guaranteeing Good Conditioning}
\label{sec:noise}
In this section, we propose a modification $r_k^+$ of the $k$-bounded-lottery rounding scheme $r_k$.  We will argue that $r_k^+$ satisfies all the properties of $r_k$ established so far, with one exception: the approximation guarantee of Lemma \ref{lem:klotteryapprox} is reduced to $1-1/e -2^{-2mn}$. Then we will show that $r_k^+$ satisfies the curvature assumption of Lemma \ref{lem:solveconvexconditioned}, demonstrating that said assumption may be removed.  Therefore Algorithm \ref{alg:midr}, instantiated with $r=r_k^+$ for combinatorial public projects with MRS valuations in the bounded-lottery-value oracle model, is $(1-1/e - 2^{-2mn})$ approximate and can be  implemented in expected $\poly(n,m)$ time. Finally, we show in Remark \ref{rem:recoverapprox}  how to recover the $2^{-2mn}$ term to get a clean $1-1/e$ approximation ratio, as claimed in Theorem \ref{thm:CPPmain}.

Let $\mu= 2^{-2nm}$.  We define $r_k^+$ in Algorithm \ref{alg:round2}. Intuitively, $r_k^+$  first chooses a tentative set $S \sse [m]$ of projects using $r_k$. Then it cancels its choice  with  small probability $\mu$. Finally, with probability $\beta$ it chooses a random project $j^* \in [m]$ and lets $S=\set{j^*}$. $\beta$ is defined as the fraction of projects included in the original tentative choice of $S$. The motivation behind this seemingly bizarre definition of $r_k^+$ is purely technical: as we will see, it can be thought of as adding  ``concave noise'' to $r_k$.

\begin{algorithm}
\caption{Modified $k$-bounded-lottery  Rounding Scheme $r_k^+$}
\label{alg:round2}
\begin{algorithmic}[1]
\INPUT Fractional solution $x\in \RR^m$ with $\sum_j x_{j} \leq k$, and $0 \leq x_{j} \leq 1$ for all $j$.
\OUTPUT Feasible solution $S \sse [m]$ with $|S| \leq k$
\STATE Let $S = r_k(x)$
\STATE Let $\beta=\frac{ |S|}{m}$
\STATE Draw $q_1 \in [0,1]$ uniformly
\IF{$q_1 \in [0,\mu]$}
\STATE Let $S = \emptyset$
\STATE Draw $q_2 \in [0,1]$ uniformly
\IF{$q_2 \in [0,\beta]$}
\STATE Choose project $j^* \in [m]$ uniformly at random.
\STATE Let $S=\set{j^*}$
\ENDIF
\ENDIF
 \end{algorithmic}
\end{algorithm}

We can write the expected welfare $\ex_{S \sim r_k^+(x)}[\sum_i v_i(S)]$ as follows. 
\begin{align*}
 \ex_{S \sim r_k(x)} \left[ (1-\mu) \sum_i v_i(S) + \mu \beta \sum_i v_i(j^*) \right].
\end{align*}
Using linearity of expectations and the fact that $\beta$ is independent of the choice of $j^*$ to simplify the expression, we get that  $\ex_{S \sim r_k^+(x)}[\sum_i v_i(S)]$ is equal to
\begin{align*}
 (1-\mu) \ex_{S \sim r_k(x)}\left[\sum_i v_i(S)\right] + \mu \ex[\beta] \frac{\sum_{j=1}^m \sum_{i=1}^n v_i(\set{j})}{m}.
\end{align*}

Observe that $r_k$ chooses a project  $j$ with probability $1-(1-x_j/k)^k$. Therefore, the expectation of $\beta$ is $\frac{\sum_{j} 1-(1-x_j/k)^k}{m}$. This gives:
\begin{align}\label{eq:rkk_expectation}
& \ex_{S \sim r_k^+(x)}\left[\sum_i v_i(S)\right] = (1-\mu) \ex_{S \sim r_k(x)}\left[\sum_i v_i(S)\right] + \frac{\mu}{m^2}  \left(\sum_{j=1}^m \sum_{i=1}^n v_i(\set{j})\right) \left(\sum_{j=1}^m   1-(1-x_j/k)^k \right).
\end{align}

It is clear that the expected welfare when using $r=r_k^+$ is within $1-\mu = 1-2^{-2nm}$ of the expected welfare when using $r=r_k$ in the instantiation of Algorithm \ref{alg:midr}. Using Lemma \ref{lem:klotteryapprox}, we conclude that $r_k^+$ is a $(1-1/e - 2^{-2nm})$-approximate rounding scheme. Moreover, using Lemma \ref{lem:klotteryconvex}, as well as the fact that $1-(1-x_j/k)^k$ is a concave function, we conclude that $r_k^+$ is a convex rounding scheme. Therefore, this establishes the analogues of Lemmas \ref{lem:klotteryapprox} and \ref{lem:klotteryconvex} for $r_k^+$. It is elementary  to verify that our proof of Lemma \ref{lem:solveconvexconditioned} can be adapted to $r_k^+$ as well.

It remains to show that $r_k^+$ is ``sufficiently concave''. This would establish that the conditioning assumption of Section \ref{sec:wellconditioned} is unnecessary for $r_k^+$. We will show that expression \eqref{eq:rkk_expectation} is a concave function with curvature of magnitude at least $\lambda=\frac{\sum_{i=1}^n v_i([m])}{ em^2 2^{2nm}}$ everywhere. Since the curvature of concave functions is always non-positive, and moreover the curvature of the sum of two functions is the sum of their curvatures, it suffices to show that the second term of the sum \eqref{eq:rkk_expectation} has curvature of magnitude at least $\lambda$. We note that the curvature of $\sum_j \left( 1-(1-x_j/k)^k\right)$ is at least $e^{-1}$  over $x\in [0,1]^{m}$. Therefore, the curvature of the second term of \eqref{eq:rkk_expectation} is at least  \[\frac{\mu}{m^2} \left(\sum_i {v_i([m])}\right) e^{-1} = \lambda\] as needed.

\begin{remark}\label{rem:recoverapprox}
  In this section, we sacrificed $2^{-2nm}$ in the approximation ratio in order to guarantee expected polynomial runtime of our algorithm even when convex program \eqref{lp:absorbrounding} is not well-conditioned. This loss can be recovered to get a clean $1-1/e$ approximation as follows. Given our $(1-1/e-2^{-2nm})$-approximate MIDR algorithm $\A$, construct the following algorithm $\A'$: Given an instance of combinatorial public projects, $\A'$ runs $\A$ on the instance with probability $1- e2^{-2nm}$, and with the remaining probability solves the instance optimally in exponential time $O(2^{2nm})$. It was shown in \cite{DR10} that a random composition of MIDR mechanisms is MIDR, therefore $\A'$ is MIDR. The expected runtime of $A'$ is bounded by the expected runtime of $\A$ plus $e 2^{-2nm} \cdot O(2^{2nm}) =O(1)$. Finally, the expected approximation  of $A'$ is the weighted average of the approximation ratio of $\A$ and the optimal approximation ratio $1$, and is at least $(1-e2^{-2nm})(1-1/e - 2^{-2nm}) + e2^{-2nm} \geq 1-1/e$.
\end{remark}

\section{Additional Preliminaries}

\subsection{Matroid Theory}
\label{app:matroids}
In this section, we review some basics of matroid theory. For a more comprehensive reference, we refer the reader to \cite{oxley}.   

A {\em matroid} $M$ is a pair $(\X,\I)$, where $\X$ is a finite \emph{ground set}, and $\I$ is a non-empty family of subsets of $\X$  satisfying the following two properties.  (1) {\em Downward closure:} If~$S$ belongs
to~$\I$, then so do all subsets of~$S$. (2)
{\em The Exchange Property:} Whenever $T,S \in \I$ with $|T| < |S|$,
there is some $x \in S \sm T$ such that $T \union \set{x} \in \I$. Elements of $\I$ are often referred to as  the \emph{independent sets} of the matroid. Subsets of $\X$ that are not in $\I$ are often called \emph{dependent}.

We associate with matroid $M$ a set function $rank_M:2^\X \to \NN$, known as the \emph{rank function of $M$}, defined as follows: $rank_M(A)= \max_{S \in \I} |S \intersect A|$. 
Equivalently, the rank of set $A$ in matroid $M$ is the maximum size of an independent set contained in $A$.  
A set function $f$ on a ground set $\X$ is a \emph{matroid rank function} if there exists a matroid $M$ on the same ground set such that $f=rank_M$. 
Matroid rank functions are monotone ($f(S) \leq f(T)$ when $S \sse T$), normalized ($f(\emptyset)=0$), and submodular ($f(S) + f(T) \geq f(S \intersect T) + f(S \union T)$ for all $S$ and $T$).

\subsection{Convex Optimization}
\label{sec:convexoptimization}

In this section, we distill some basics of convex optimization. For more details, see  \cite{nemirovski}.

\begin{definition}
  A maximization problem is given by a set $\Pi$ of instances $(\P,c)$, where $\P$ is a subset of some euclidean space, $c: \P \to \RR$, and the goal is to maximize $c(x)$ over $x \in \P$. We say $\Pi$  is a convex maximization problem if for every $(\P,c) \in \Pi$, $\P$ is a compact convex set, and $c: \P \to \RR$ is concave. If $c: \P \to \RR^+$ for every instance of $\Pi$, we say $\Pi$ is non-negative.
\end{definition}

\begin{definition}\label{def:rsolvable}
  We say a non-negative maximization problem $\Pi$ is \emph{$R$-solvable} in polynomial time if there is an algorithm that takes as input the representation of an instance $\I=(\P,c) \in \Pi$  --- where we use $|\I|$ to denote the number of bits in the representation --- and an approximation parameter $\epsilon$, and in time $\poly(|\I|,\log(1/\epsilon))$ outputs $x \in \P$ such that $c(x) \geq (1-\epsilon) \max_{y \in \P} c(y)$.
\end{definition}

\begin{fact}\label{fact:convex_solvability}
  Consider a non-negative convex maximization problem $\Pi$. If the following are satisfied, then $\Pi$ is $R$-solvable in polynomial time using the ellipsoid method. We let $\I=(\P,c)$ denote an instance of $\Pi$, and let $m$ denote the dimension of the ambient euclidean space.
  \begin{enumerate}
  \item Polynomial Dimension: $m$ is polynomial in $|\I|$.
  \item Starting ellipsoid: There is an algorithm that computes, in time $\poly(|\I|)$, a point $c \in \RR^m$, a matrix $A \in \RR^{m \by m}$, and a number $\V \in \RR$ such that the following hold. We use $E(c,A)$ to denote the ellipsoid given by center $c$ and linear transformation $A$.
    \begin{enumerate}
    \item  $E(c,A) \supseteq  \P$
    \item $\V \leq volume(\P)$
    \item $\frac{volume(E(c,A))}{\V} \leq 2^{\poly(|\I|)}$
    \end{enumerate}
  \item Separation oracle for $\P$: There is an algorithm that takes takes input $\I$ and $x \in \RR^m$, and in time $\poly(|\I|,|x|)$ where $|x|$ denotes the size of the representation of $x$, outputs ``yes'' if $ x\in \P$, otherwise outputs $h \in \RR^m$ such that $h^T x < h^T y$ for every $y \in \P$.
  \item First order oracle for $c$: There is an algorithm that takes input $\I$ and $x \in \RR^m$, and in time $\poly(|\I|,|x|)$ outputs $c(x) \in \RR$ and $\grad c ( x) \in \RR^m$.

  \end{enumerate}

\end{fact}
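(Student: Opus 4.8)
The plan is to prove the Fact by the \emph{central-cut ellipsoid method with objective (subgradient) cuts}, the standard vehicle for oracle-based convex optimization; see e.g.\ \cite{nemirovski,schrijver}. Fix an instance $\I=(\P,c)$ with $\P\sse\RR^m$, let $OPT=\max_{y\in\P}c(y)$ be attained at $x^*\in\P$, and fix the target $\epsilon>0$. First I would invoke Condition 2 to obtain a starting ellipsoid $E_0=E(c_0,A_0)\supseteq\P$ together with the volume certificate $\V\le volume(\P)$ and $volume(E_0)/\V\le 2^{\poly(|\I|)}$. Then I would run the following iteration, maintaining an ellipsoid $E_k$ and the best feasible point $z^*$ encountered so far (with value $v^*=c(z^*)$): at step $k$ with center $z_k$, call the separation oracle (Condition 3); if it returns a hyperplane $h$ separating $z_k$ from $\P$, replace $E_k$ by the smallest ellipsoid containing $E_k\intersect\set{y:h^Ty\ge h^Tz_k}$ (a valid cut, since $\P$ lies on the kept side); if instead $z_k\in\P$, call the first-order oracle (Condition 4) to get $c(z_k)$ and $g:=\grad c(z_k)$, update $z^*$, and --- if $g=0$, output $z_k$ (a global maximizer of the concave $c$) and halt, otherwise replace $E_k$ by the smallest ellipsoid containing $E_k\intersect\set{y:g^Ty\ge g^Tz_k}$ (an \emph{objective cut}). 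After $N$ steps, output $z^*$.

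The correctness argument I have in mind rests on one geometric observation that converts volume shrinkage into a \emph{multiplicative} guarantee. Each step shrinks the ellipsoid volume by a factor $\le e^{-1/(2(m+1))}$, so $volume(E_N)\le 2^{\poly(|\I|)}\,\V\,e^{-N/(2(m+1))}$. For $\eta\in(0,1)$ set $P_\eta=(1-\eta)x^*+\eta\P=\set{(1-\eta)x^*+\eta y:y\in\P}$, a shrunken copy of $\P$ centered at the optimum, with $volume(P_\eta)=\eta^m volume(\P)\ge\eta^m\V$ and, by concavity together with $c\ge 0$, $c\big((1-\eta)x^*+\eta y\big)\ge(1-\eta)OPT+\eta c(y)\ge(1-\eta)OPT$ for every $y\in\P$. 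Suppose toward a contradiction that $v^*<(1-\eta)OPT$. Then every $x\in P_\eta$ has $c(x)>v^*\ge c(z_k)$ for each feasible center $z_k$ seen; such an $x$ survives every $\P$-separation cut (as $x\in\P$) and every objective cut, since concavity gives $c(x)\le c(z_k)+g^T(x-z_k)$ and $c(x)>c(z_k)$ forces $g^T(x-z_k)>0$, placing $x$ strictly on the kept side. Hence $P_\eta\sse E_N$ and $\eta^m\V\le volume(E_N)$, which fails once $N>2(m+1)\big(\poly(|\I|)\ln 2+m\ln(1/\eta)\big)$. (The same inequality with $\eta=1$, using that pure $\P$-cuts keep all of $\P$ inside $E_N$, shows a feasible center must appear, so $z^*$ is well-defined.) Choosing $\eta=\epsilon$ and $N$ just above this threshold --- which is $\poly(|\I|,\log(1/\epsilon))$ because $m=\poly(|\I|)$ by Condition 1 --- gives $v^*\ge(1-\epsilon)OPT$, and $z^*\in\P$ since the separation oracle certified it.

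For the running time: each iteration makes one call to each of the three oracles (Conditions 2--4), each running in $\poly(|\I|)$ time on $\poly(|\I|)$-bit data, plus an $O(m^2)$ ellipsoid update, for a total of $\poly(|\I|,\log(1/\epsilon))$. The one genuinely technical point --- which I would handle exactly as in the classical analysis of the ellipsoid method \cite{schrijver,nemirovski} --- is numerical: centers and shape matrices must be rounded to $\poly(|\I|,N)$ bits each step, and one checks this degrades the per-step volume contraction by only a constant in the exponent while preserving $\P$ (hence $P_\eta$) inside every $E_k$, so $N$ grows by only a constant factor. I expect this bit-complexity bookkeeping, rather than the conceptual argument, to be the main obstacle; the conceptual heart is the ``shrunk copy around the optimum'' trick, which is what lets a purely volumetric stopping rule deliver a multiplicative approximation without ever certifying that a thin superlevel set $\set{x\in\P:c(x)\ge t}$ is empty.
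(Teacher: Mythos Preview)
The paper does not prove this statement at all: it is stated as a \emph{Fact}, attributed to standard convex-optimization background (specifically the reference \cite{nemirovski}), and no argument is given in the text. So there is no ``paper's own proof'' to compare against.

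Your proposal is the correct and standard way to justify the Fact: the central-cut ellipsoid method with feasibility cuts from the separation oracle and objective (subgradient) cuts from the first-order oracle, together with the ``shrunk copy around the optimum'' set $P_\eta=(1-\eta)x^*+\eta\P$ to convert volume shrinkage into a multiplicative guarantee. Your use of non-negativity of $c$ to get $c((1-\eta)x^*+\eta y)\ge(1-\eta)OPT$ is exactly where the hypothesis that $\Pi$ is non-negative enters, and your survival argument for $P_\eta$ under both kinds of cuts is correct. The deferral of the bit-complexity/rounding analysis to the classical references is appropriate for a sketch at this level; that bookkeeping is indeed the only place where care is needed, and it is handled in \cite{nemirovski,schrijver} precisely as you indicate.
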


\subsection{Computing Payments}

\begin{lemma}
  \label{lem:compute_payments}
Let $\A$ be an MIDR allocation rule for combinatorial public projects, and let $v_1, \ldots, v_n$ be input valuations. Assume  black-box access to $\A$, and value oracle access to $\set{v_i}_{i=1}^n$.  We can compute, with $\poly(n)$ over-head in runtime, payments $p_1,\ldots,p_n$ such that $\ex[p_i]$ equals the VCG payment of player $i$ for MIDR allocation rule $\A$ on input $v_1,\ldots,v_n$.
\end{lemma}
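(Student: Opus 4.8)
The plan is to recall the structure of VCG payments for an MIDR allocation rule and show that each ingredient can be estimated in an unbiased way using one additional black-box call to $\A$ together with a polynomial number of value-oracle queries. Recall that the VCG payment of player $i$ when the mechanism commits to the distribution $D^* \in \D$ maximizing $\ex_{S \sim D}[\sum_i v_i(S)]$ is
\[
p_i \;=\; h_i(v_{-i}) \;-\; \ex_{S \sim D^*}\!\left[\sum_{i' \neq i} v_{i'}(S)\right],
\]
where $h_i(v_{-i})$ is any ``pivot'' term independent of player $i$'s report; the standard choice is $h_i(v_{-i}) = \ex_{S \sim D^{(i)}}[\sum_{i' \neq i} v_{i'}(S)]$, where $D^{(i)}$ is the distribution in $\D$ maximizing welfare when player $i$ is removed (equivalently, when $v_i \equiv 0$). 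Both $D^*$ and $D^{(i)}$ are obtained by invoking the MIDR allocation rule $\A$: $D^*$ on the true input $(v_1,\dots,v_n)$, and $D^{(i)}$ on the input $(0, v_{-i})$ with player $i$'s valuation replaced by the zero function (this is a valid CPP valuation, being normalized and monotone). The truthfulness guarantee of MIDR mechanisms with these payments is standard \cite{DD09}; we only need to argue we can \emph{compute} them with the claimed overhead.

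First I would handle the case where $\A$ returns an explicit sample $S$ rather than a description of the distribution. We sample $S^* \sim D^*$ by a single call to $\A(v_1,\dots,v_n)$, and for each $i$ sample $S^{(i)} \sim D^{(i)}$ by a single call to $\A(0,v_{-i})$; this is $n+1$ calls to $\A$ total. We then set
\[
p_i \;=\; \sum_{i' \neq i} v_{i'}\!\left(S^{(i)}\right) \;-\; \sum_{i' \neq i} v_{i'}\!\left(S^*\right),
\]
where each $v_{i'}(\cdot)$ is obtained by a value-oracle query. Since $S^*$ is drawn from $D^*$ and $S^{(i)}$ from $D^{(i)}$ independently of the other coin flips, linearity of expectation gives $\ex[p_i] = \ex_{S \sim D^{(i)}}[\sum_{i' \neq i} v_{i'}(S)] - \ex_{S \sim D^*}[\sum_{i' \neq i} v_{i'}(S)]$, which is exactly the VCG payment of player $i$. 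The total overhead is $n+1$ calls to $\A$ and $O(n^2)$ value-oracle queries (computing $\sum_{i' \neq i} v_{i'}(S^*)$ for all $i$ takes $O(n)$ queries to evaluate all $v_{i'}(S^*)$, reused across players; similarly $O(n)$ queries for each $S^{(i)}$), i.e.\ $\poly(n)$ overhead as claimed.

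The only subtlety — and the step I expect to require the most care — is ensuring the payments are \emph{individually} well-defined random variables with the right expectation even though $\A$ is randomized: we must draw the sample for $D^*$ once and reuse it for all players, and draw a fresh independent sample from $D^{(i)}$ for each $i$, so that the expectation decouples correctly and no player's reported valuation influences its own pivot term $\ex_{S \sim D^{(i)}}[\cdot]$ (which is what preserves truthfulness-in-expectation). One should also note that individual rationality and no-positive-transfers can be arranged by the usual normalization, but since the lemma only asserts that $\ex[p_i]$ equals the VCG payment, it suffices to exhibit the unbiased estimator above. This completes the proof.
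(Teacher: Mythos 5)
Your proposal is correct and matches the paper's proof: both compute an unbiased estimator of the VCG payment by calling $\A$ once on the true input and once per player with that player's valuation replaced by the zero valuation, then taking the difference of the other players' (value-oracle-evaluated) welfare on the two sampled outcomes, with linearity of expectation giving the claim. Your additional bookkeeping on the number of oracle calls and the independence of samples is consistent with, and slightly more explicit than, the paper's argument.
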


We note that an essentially identical lemma was proved in \cite{DRY11}. Nevertheless, we include a proof for completeness.

\begin{proof}
%  It suffices to compute a (random) payment for each player with expectation equal to his VCG payment. 

 Without loss of generality, it suffices to show how to compute $p_1$. Let ${\bf 0} : 2^{[m]} \to \RR$ be the valuation evaluating to $0$ at each bundle. Recall (see e.g. \cite{Nis07}) that the VCG payment of player $1$  is equal to 
\begin{align}\label{eq:vcgpayment}
 \ex_{T \sim \A({\bf 0},v_2,\ldots,v_n)}\left[\sum_{i=2}^n v_i(T)\right] -\ex_{S \sim \A(v_1,\ldots,v_n)}\left[\sum_{i=2}^n v_i(S)\right]. 
\end{align}

Let $S$ be a sample from $\A(v_1,\ldots,v_n)$, and let $T$ be a sample from $\A({\bf 0},v_2,\ldots,v_n)$. Let $p_1= \sum_{i=2}^n v_i(T) - \sum_{i=2}^n v_i(S)$. Using linearity of expectations, it is easy to see that the expectation of $p_1$ is equal to the expression in \eqref{eq:vcgpayment}. This completes the proof.
\end{proof}

We note that the mechanism resulting from Lemma \ref{lem:compute_payments} is individually rational in expectation, and each payment is non-negative in expectation. We leave open the question of whether it is possible to enforce individual rationality and non-negative payments for our mechanism ex-post.

\end{document}